\DeclarePairedDelimiter{\paren}{(}{)}
\pgfplotsset{
  every  tick/.style={red,}, minor x tick num=1,
  cycle list={teal,every mark/.append style={fill=teal!80!black},mark=*\\%
orange,every mark/.append style={fill=orange!80!black},mark=square*\\%
cyan!60!black,every mark/.append style={fill=cyan!80!black},mark=otimes*\\%
red!70!white,mark=star\\%
lime!80!black,every mark/.append style={fill=lime},mark=diamond*\\%
red,densely dashed,every mark/.append style={solid,fill=red!80!black},mark=*\\%
yellow!60!black,densely dashed,
every mark/.append style={solid,fill=yellow!80!black},mark=square*\\%
black,every mark/.append style={solid,fill=gray},mark=otimes*\\%
blue,densely dashed,mark=star,every mark/.append style=solid\\%
red,densely dashed,every mark/.append style={solid,fill=red!80!black},mark=diamond*\\%
}
}
\def \CILKtable {
\begin{tikzpicture}[scale = .8]
\begin{axis}[
width = 5 in,
height = 4in,
title={Speedup Versus Number of Threads},
xtick pos=left,
ytick pos=left,
legend style={draw=none},
axis line style = { draw = none },
legend pos= north west,
xtick = data,
xlabel={Number of Threads},
ylabel={Speedup Over Serial Partition},
ymax = 12,
legend columns = 2,
scatter/classes=%
{a={mark=o,draw=blue}}]
\addplot coordinates {( 1, 0.499048) ( 2, 0.995494) ( 3, 1.42948) ( 4, 1.86113) ( 5, 2.26614) ( 6, 2.65797) ( 7, 2.97174) ( 8, 3.26921) ( 9, 3.43004) ( 10, 3.62281) ( 11, 3.78679) ( 12, 3.89481) ( 13, 4.0002) ( 14, 4.0986) ( 15, 4.15578) ( 16, 4.2191) ( 17, 4.22999) ( 18, 4.27599) };
\addplot coordinates {( 1, 0.35092) ( 2, 0.636608) ( 3, 0.894292) ( 4, 1.11755) ( 5, 1.32674) ( 6, 1.5096) ( 7, 1.65121) ( 8, 1.77494) ( 9, 1.88432) ( 10, 1.97837) ( 11, 2.0581) ( 12, 2.11591) ( 13, 2.15699) ( 14, 2.18845) ( 15, 2.21159) ( 16, 2.21657) ( 17, 2.20736) ( 18, 2.21557) };
\addplot coordinates {( 1, 0.223968) ( 2, 0.398625) ( 3, 0.551717) ( 4, 0.691607) ( 5, 0.812203) ( 6, 0.912936) ( 7, 0.990429) ( 8, 1.05055) ( 9, 1.09777) ( 10, 1.13562) ( 11, 1.15517) ( 12, 1.18848) ( 13, 1.20383) ( 14, 1.21364) ( 15, 1.22103) ( 16, 1.22575) ( 17, 1.22285) ( 18, 1.23444) };
\addplot coordinates {( 1, 0.619888) ( 2, 1.24099) ( 3, 1.80558) ( 4, 2.38286) ( 5, 2.95654) ( 6, 3.52917) ( 7, 4.09348) ( 8, 4.63922) ( 9, 5.19034) ( 10, 5.72539) ( 11, 6.27145) ( 12, 6.80311) ( 13, 7.34442) ( 14, 7.85497) ( 15, 8.24706) ( 16, 8.57062) ( 17, 8.77723) ( 18, 9.5674) };
\addplot coordinates {( 1, 0.767408) ( 2, 1.5359) ( 3, 2.21084) ( 4, 2.89005) ( 5, 3.56954) ( 6, 4.2382) ( 7, 4.9214) ( 8, 5.58393) ( 9, 6.23961) ( 10, 6.90587) ( 11, 7.54451) ( 12, 8.18186) ( 13, 8.83243) ( 14, 9.4207) ( 15, 9.88984) ( 16, 10.3425) ( 17, 10.4524) ( 18, 11.1205) };
\legend{Low-Space, Med-Space, High-Space, Smoothed-Striding, Strided}
\end{axis}
\end{tikzpicture}
}
\def \partitionbandwidthboundtable {
\begin{tikzpicture}[scale = .8]
\begin{axis}[
width = 5 in,
height = 4in,
title={Speedup Versus Number of Threads},
xtick pos=left,
ytick pos=left,
legend style={draw=none},
axis line style = { draw = none },
legend pos= north west,
xtick = data,
xlabel={Number of Threads},
ylabel={Speedup Over Serial Partition},
ymax = 20,
legend columns = 2,
scatter/classes=%
{a={mark=o,draw=blue}}]
\addplot coordinates {( 1, 0.499378) ( 2, 0.995949) ( 3, 1.4328) ( 4, 1.86816) ( 5, 2.26982) ( 6, 2.66432) ( 7, 2.98407) ( 8, 3.26602) ( 9, 3.43066) ( 10, 3.61544) ( 11, 3.79325) ( 12, 3.91949) ( 13, 4.00244) ( 14, 4.08389) ( 15, 4.15375) ( 16, 4.22151) ( 17, 4.25438) ( 18, 4.25622) };
\addplot coordinates {(1, 0.894939)(2, 1.72634)(3, 2.26699)(4, 2.66131)(5, 3.15777)(6, 3.4024)(7, 3.50312)(8, 3.49686)(9, 3.62955)(10, 3.74294)(11, 3.90093)(12, 3.9593)(13, 4.0131)(14, 4.04714)(15, 4.13799)(16, 4.13782)(17, 4.10515)(18, 4.15355)};
\addplot coordinates {( 1, 0.62101) ( 2, 1.24639) ( 3, 1.81072) ( 4, 2.37852) ( 5, 2.95182) ( 6, 3.53233) ( 7, 4.09153) ( 8, 4.63759) ( 9, 5.18124) ( 10, 5.73746) ( 11, 6.27769) ( 12, 6.82206) ( 13, 7.35802) ( 14, 7.81717) ( 15, 8.17117) ( 16, 8.81183) ( 17, 8.81973) ( 18, 9.18636) };
\addplot coordinates {(1, 3.52223)(2, 6.68603)(3, 8.70095)(4, 10.276)(5, 12.1258)(6, 12.8292)(7, 13.4991)(8, 13.4622)(9, 14.1134)(10, 14.1066)(11, 14.8215)(12, 15.3338)(13, 15.3938)(14, 15.3942)(15, 15.8412)(16, 15.5877)(17, 15.9778)(18, 15.9519)};
\legend{Low-Space, Low-Space Bandwidth Constraint, Smoothed Striding, Smoothed Striding Bandwidth Constraint}
\end{axis}
\end{tikzpicture}
}
\def \CILKsorttable {
\begin{tikzpicture}[scale = .8]
\begin{axis}[
width = 5 in,
height = 4in,
title={Speedup Versus Number of Threads},
xtick pos=left,
ytick pos=left,
legend style={draw=none},
axis line style = { draw = none },
legend pos= north west,
xtick = data,
xlabel={Number of Threads},
ylabel={Speedup Over Serial Partition},
ymax = 15,
legend columns = 2,
scatter/classes=%
{a={mark=o,draw=blue}}]
\addplot coordinates {( 1, 0.87864) ( 2, 1.68661) ( 3, 2.39404) ( 4, 3.03172) ( 5, 3.71549) ( 6, 4.45118) ( 7, 5.07595) ( 8, 5.76618) ( 9, 6.26716) ( 10, 6.79929) ( 11, 7.34223) ( 12, 7.74736) ( 13, 8.17148) ( 14, 8.70784) ( 15, 8.91736) ( 16, 9.26139) ( 17, 9.47532) ( 18, 9.68665)}; 
\addplot coordinates {( 1, 0.921182) ( 2, 1.80562) ( 3, 2.57745) ( 4, 3.37831) ( 5, 4.15358) ( 6, 4.91613) ( 7, 5.7093) ( 8, 6.41041) ( 9, 7.14098) ( 10, 7.907) ( 11, 8.6573) ( 12, 9.46479) ( 13, 10.1098) ( 14, 10.7975) ( 15, 11.4046) ( 16, 12.1549) ( 17, 12.7427) ( 18, 13.346)}; 
\addplot coordinates {( 1, 0.949951) ( 2, 1.87888) ( 3, 2.70758) ( 4, 3.52248) ( 5, 4.40049) ( 6, 5.19675) ( 7, 6.02708) ( 8, 6.79362) ( 9, 7.69896) ( 10, 8.36354) ( 11, 9.17828) ( 12, 9.92823) ( 13, 10.7114) ( 14, 11.4187) ( 15, 12.0935) ( 16, 12.8171) ( 17, 13.5719) ( 18, 14.2507) };
\legend{Low-Space, Smoothed Striding, Strided}
\end{axis}
\end{tikzpicture}
}
\def \serialtable {
\begin{tikzpicture}[scale = .8]
\begin{axis}[
width = 5 in,
height = 4in,
title={Slowdown Versus Input Size in Serial},
xtick pos=left,
ytick pos=left,
ymax = 6,
ymin = 0,
legend style={draw=none},
axis line style = { draw = none },
legend pos= north west,
xtick = data,
xlabel={Log Input Size},
ylabel={Slowdown Over Serial Partition},
legend columns = 2,
scatter/classes=%
{a={mark=o,draw=blue}}]
\addplot coordinates {( 23, 1.82895) ( 24, 1.81639) ( 25, 1.83007) ( 26, 1.83905) ( 27, 1.84722) ( 28, 1.89343) ( 29, 1.90068) ( 30, 1.91837) };
\addplot coordinates {( 23, 2.75658) ( 24, 2.71803) ( 25, 2.68137) ( 26, 2.67157) ( 27, 2.67443) ( 28, 2.70131) ( 29, 2.70551) ( 30, 2.71713) };
\addplot coordinates {( 23, 4.31579) ( 24, 4.26885) ( 25, 4.20098) ( 26, 4.18464) ( 27, 4.18546) ( 28, 4.22519) ( 29, 4.22257) ( 30, 4.23265) };
\addplot coordinates {( 23, 1.55263) ( 24, 1.54426) ( 25, 1.53431) ( 26, 1.53431) ( 27, 1.53554) ( 28, 1.53491) ( 29, 1.55852) ( 30, 1.53317) };
\addplot coordinates {( 23, 1.53289) ( 24, 1.53443) ( 25, 1.53105) ( 26, 1.52778) ( 27, 1.53064) ( 28, 1.52613) ( 29, 1.5246) ( 30, 1.52919) };
\legend{Low-Space, Med-Space, High-Space, Smoothed Striding, Strided}
\end{axis}
\end{tikzpicture}
}
\newcommand{\dec}{\operatorname{dec}}
\newcommand{\poly}{\operatorname{poly}}
\newcommand{\polylog}{\operatorname{polylog}}
\newcommand{\defn}[1]{{\textit{\textbf{\boldmath #1}}}\xspace}
\renewcommand{\paragraph}[1]{\vspace{0.09in}\noindent{\bf \boldmath #1.}} 
\def\E{\operatorname{\mathbb{E}}}
\newtheorem{thm}{Theorem}[section]
\theoremstyle{remark}
\newtheorem{theorem}{Theorem}[section]
\newtheorem{lemma}[thm]{Lemma}
\newtheorem{proposition}[thm]{Proposition}
\newtheorem{corollary}[thm]{Corollary}
\theoremstyle{remark}
\title{In-Place Parallel-Partition Algorithms \\ using Exclusive-Read-and-Write Memory}
\date{\vspace{-5ex}}
\author[1]{\small William Kuszmaul\thanks{Supported by a Hertz fellowship and a NSF GRFP fellowship}}
\author[1]{\small Alek Westover\thanks{Supported by MIT}}
\affil[1]{\footnotesize MIT}
\affil[ ]{\textit{kuszmaul@mit.edu, alek.westover@gmail.com}}
\begin{document}
  \maketitle

\begin{abstract}
We present an in-place algorithm for the parallel partition
problem that has linear work and polylogarithmic span. The
algorithm uses only exclusive read/write shared variables, and
can be implemented using parallel-for-loops without any
additional concurrency considerations (i.e., the algorithm is
EREW). A key feature of the algorithm is that it exhibits
provably optimal cache behavior, up to small-order factors.

We also present a second in-place EREW algorithm that has linear
work and span $O(\log n \cdot \log \log n)$, which is within an
$O(\log\log n)$ factor of the optimal span. By using this
low-span algorithm as a subroutine within the cache-friendly
algorithm, we are able to obtain a single EREW algorithm that
combines their theoretical guarantees: the algorithm achieves
span $O(\log n \cdot \log \log n)$ and optimal cache behavior. As
an immediate consequence, we also get an in-place EREW quicksort
algorithm with work $O(n \log n)$, span $O(\log^2 n \cdot \log
\log n)$.

Whereas the standard EREW algorithm for parallel partitioning is
memory-bandwidth bound on large numbers of cores, our
cache-friendly algorithm is able to achieve near-ideal scaling in
practice by avoiding the memory-bandwidth bottleneck. The
algorithm's performance is comparable to that of the Blocked
Strided Algorithm of Francis, Pannan, Frias, and Petit, which is
the previous state-of-the art for parallel EREW sorting
algorithms, but which lacks theoretical guarantees on its span
and cache behavior.
\end{abstract}

\section{Introduction}

A \defn{parallel partition} operation rearranges the elements in an
array so that the elements satisfying a particular \defn{pivot
  property} appear first. In addition to playing a central role in
parallel quicksort, the parallel partition operation is used as a
primitive throughout parallel algorithms.\footnote{In several
  well-known textbooks and surveys on parallel algorithms
  \cite{AcarBl16,Blelloch96}, for example, parallel partitions are
  implicitly used extensively to perform what are referred to as
  \emph{filter} operations.}

A parallel algorithm can be measured by its \defn{work}, the time
needed to execute in serial, and its \defn{span}, the time to execute
on infinitely many processors. There is a well-known algorithm for
parallel partition on arrays of size $n$ that we call the
\defn{Standard Algorithm} with work $O(n)$ and span
$O(\log n)$ \cite{Blelloch96,AcarBl16}. Moreover, the algorithm uses
only exclusive read/write shared memory variables (i.e., it is an
\defn{EREW} algorithm). This eliminates the need for concurrency
mechanisms such as locks and atomic variables, and ensures good
behavior even if the time to access a location is a function of the
number of threads trying to access it (or its cache line)
concurrently. EREW algorithms also have the advantage that their
behavior is internally deterministic, meaning that
the behavior of the algorithm will not differ from run to run, which
makes test coverage, debugging, and reasoning about performance
substantially easier \cite{BlellochFi12}.

The Standard Algorithm suffers from using a large amount of
auxiliary memory, however. Whereas the serial partition algorithm
is typically implemented in place, the Standard Algorithm relies
on the use of multiple auxiliary arrays of size $n$. To the best
of our knowledge, the only known linear-work and
$\polylog(n)$-span algorithms for parallel partition that are
in-place require the use of atomic operations (e.g,
fetch-and-add) \cite{HeidelbergerNo90,AxtmannWi17,TsigasZh03}.

An algorithm's memory efficiency can be critical on large inputs. The
memory consumption of an algorithm determines the largest problem size
that can be executed in memory. Many external memory algorithms (i.e.,
algorithms for problems too large to fit in memory) perform large
subproblems in memory; the size of these subproblems is again
bottlenecked by the algorithm's memory-overhead \cite{Vitter08}. In
multi-user systems, processes with larger memory-footprints can hog
the cache and the memory bandwidth, slowing down other processes.

For sorting algorithms, in particular, special attention to memory
efficiency is often given. This is because (a) a user calling the sort
function may already be using almost all of the memory in the system;
and (b) sorting algorithms, and especially parallel sorting
algorithms, are often bottlenecked by memory bandwidth. The latter
property, in particular, means that any parallel sorting algorithm
that wishes to achieve state-of-the art performance on a large
multi-processor machine must be (at least close to) in place.

Currently the only practical in-place parallel sorting algorithms either rely
heavily on concurrency mechanisms such as atomic operations
\cite{HeidelbergerNo90, AxtmannWi17, TsigasZh03}, or abstain from
theoretical guarantees \cite{FrancisPa92}. Parallel merge sort
\cite{Hagerup89} was made in-place by Katajainen \cite{Katajainen93},
but has proven too sophisticated for practical applications. Bitonic
sort \cite{BlellochLe98} is naturally in-place, and can be practical
in certain applications on super computers, but suffers in general
from requiring work $\Theta(n \log^2 n)$ rather than $O(n \log
n)$. Parallel quicksort, on the other hand, despite the many efforts
to optimize it \cite{HeidelbergerNo90, AxtmannWi17, TsigasZh03,
  FrancisPa92, Frias08}, has eluded any in-place EREW (or CREW\footnote{In a
  \defn{CREW} algorithm, reads may be concurrent, but writes may
  not. CREW stands for \emph{concurrent-read exclusive-write}.})
algorithms due to its reliance on parallel partition.

\paragraph{Results}
We consider the problem of designing a theoretically efficient
parallel-partition algorithm that also performs well in practice. All
of the algorithms considered in this paper use only exclusive
read/write shared variables, and can be implemented using
CILK parallel-for-loops without any additional concurrency considerations.

We give a simple in-place parallel-partition algorithm called the
\defn{Smoothed Striding Algorithm}, with linear work and
polylogarithmic span. Additionally, the algorithm exhibits provably
optimal cache behavior up to low-order terms. In particular, if the
input consists of $m$ cache lines, then the algorithm incurs at most
$m(1 + o(1))$ cache misses, with high probability in $m$.

We also develop a suite of techniques for transforming the standard
linear-space parallel partition algorithm into an in-place
algorithm. The new algorithm, which we call the
\defn{Blocked-Prefix-Sum Partition Algorithm}, has work $O(n)$ and span
$O(\log n \cdot \log \log n)$, which is within a $\log \log n$ factor
of optimal. As an immediate consequence, we also get an in-place
quicksort algorithm with work $O(n \log n)$ and span
$O(\log^2 n \log \log n)$. Moreover, we show that, by using the
Blocked-Prefix-Sum Partition Algorithm as a subroutine within the
Smoothed Striding Algorithm, one can combine the theoretical
guarantees of the two algorithms, achieving a span of
$O(\log n \log \log n)$, while also achieving optimal cache behavior
up to low-order terms.

In addition to analyzing the algorithms, we experimentally evaluate
their performances in practice. We find that an algorithm based on the
Blocked-Prefix-Sum Partition Algorithm is able to achieve speedups over
the standard linear-space algorithm due to increased cache
efficiency. The Blocked-Prefix-Sum Partition Algorithm does not exhibit
optimal cache behavior, however, and as we show in our experimental
evaluation, the algorithm remains bottlenecked by memory
throughput. In contrast, the cache-optimality of the Smoothed Striding
Algorithm eliminates the memory-throughput bottleneck, allowing for
nearly perfect scaling on many processors.

The memory-bandwidth bottleneck previously led researchers
\cite{FrancisPa92, Frias08} to introduce the \defn{Strided
  Algorithm}, which has near optimal cache behavior in practice, but
which exhibits theoretical guarantees only on certain random input
arrays. The Smoothed Striding Algorithm is designed to have similar
empirical performance to the Strided Algorithm, while achieving both
theoretical guarantees on work/span and on cache-optimality. This is
achieved by randomly perturbing the internal structure of the Strided
Algorithm, and adding a recursion step that was previously not
possible. Whereas the Strided Algorithm comes with theoretical
guarantees only for certain inputs, the Smoothed Striding Algorithm 
has polylogarithmic span, and exhibits provably optimal cache behavior
up to small-order factors for all inputs. In practice, the Smoothed Striding
Algorithm performs within 15\% of the Strided Algorithm on a large
number of threads.

\paragraph{Outline} We begin in Section \ref{secprelim} by discussing
background on parallel algorithms and the Parallel Partition
Problem. In Section \ref{sec:smoothing}, we present and analyze
the Smoothed Striding Algorithm. In Section \ref{secalg} we
present and analyze the Blocked-Prefix-Sum Partition Algorithm,
which achieves a nearly optimal span of $O(\log n \log \log n)$
but is not cache-optimal; this algorithm can be used as a
subroutine within the Smoothed Striding Algorithm to achieve the
same span. Section \ref{secexp} implements the algorithms from
this paper, along with the Strided Algorithm and the standard
linear-space algorithm, in order to experimentally evaluate their
performances. Finally, we conclude with open questions in Section
\ref{sec:open}. 

\section{Preliminaries}\label{secprelim}

We begin by describing the parallelism and memory model used in
the paper, and by presenting background on the parallel partition problem.

\paragraph{Workflow Model} We consider a simple language-based model of
parallelism in which algorithms achieve parallelism through the use of
\defn{parallel-for-loops} (see, e.g.,
\cite{Blelloch96,AcarBl16,CLRS}); function calls within the inner loop
then allow for more complicated parallel structures (e.g., recursion). Our
algorithms can also be implemented in the PRAM model \cite{Blelloch96,
AcarBl16}.

Formally, a parallel-for-loop is given a range size $R \in \mathbb{N}$, a
constant number of arguments $\arg_1, \arg_2, \ldots, \arg_c$, and a
body of code. For each $i \in \{1, \ldots, R\}$, the loop launches a
thread that is given loop-counter $i$ and local copies of the
arguments $\arg_1, \arg_2, \ldots, \arg_c$. The threads are then taken up by
processors and the iterations of the loop are performed in parallel. Only after
every iteration of the loop is complete can control flow continue past the
loop.

A parallel algorithm may be run on an arbitrary number $p$ of
processors. The algorithm itself is oblivious to $p$, however, leaving
the assignment of threads to processors up to a scheduler.

The \defn{work} $T_1$ of an algorithm is the time that the algorithm
would require to execute on a single processor. The \defn{span}
$T_\infty$ of an algorithm is the time to execute on infinitely many
processors. The scheduler is assumed to contribute no overhead to the
span. In particular, if each iteration of a
parallel-for-loop has span $s$, then the full parallel loop has span
$s + O(1)$ \cite{Blelloch96,AcarBl16}.

The work $T_1$ and span $T_\infty$ can be used to quantify the time $T_p$
that an algorithm requires to execute on $p$ processors using a greedy
online scheduler. If the scheduler is assumed to contribute no
overhead, then Brent's Theorem \cite{Brent74} states that for any
$p$,
$$\max(T_1 / p, T_\infty) \le T_p \le T_1 / p + T_\infty.$$

The work-stealing algorithms used in the Cilk extension of C/C++ realize
the guarantee offered by Brent's Theorem within a constant factor
\cite{BlumofeJo96,BlumofeLe99}, with the added caveat that parallel-for-loops typically induce an additional additive overhead of $O(\log
R)$. 

\paragraph{Memory Model}
Memory is \defn{exclusive-read} and \defn{exclusive-write}. That is,
no two threads are ever permitted to attempt to read or write to the
same variable concurrently.  The exclusive-read exclusive-write memory
model is sometime referred to as the \defn{EREW model} (see, e.g.,
\cite{Hagerup89}). 

Note that threads are not in lockstep (i.e., they may progress at arbitrary
different speeds), and thus the EREW model requires algorithms to be data-race
free in order to avoid the possibility of non-exclusive data accesses.

In an \defn{in-place} algorithm, each thread is given $O(\polylog n)$
memory upon creation that is deallocated when the thread dies. This
memory can be shared with the thread's children. However, the depth of
the parent-child tree is not permitted to exceed $O(\polylog n)$.

Whereas the EREW memory model prohibits concurrent accesses to memory,
on the other side of the spectrum are CRCW
(concurrent-read-concurrent-write) models, which allow for both reads
and writes to be performed concurrently (and in some variants even
allow for atomic operations)
\cite{Blelloch96,AcarBl16,MatiasVi95}. One approach to designing
efficient EREW algorithms is to simulate efficient CRCW algorithms in
the EREW model \cite{MatiasVi95}. The known simulation techniques
require substantial space overhead, however, preventing the design of
in-place algorithms \cite{MatiasVi95}.\footnote{The known simulation
  techniques also increase the total work in the original algorithm,
  although this can be acceptable if only a small number of atomic
  operations need to be simulated.}

In addition to being the first in-place and polylogarithmic-span EREW
algorithms for the parallel-partition problem, our algorithms are also
the first such CREW algorithms. In a \defn{CREW} algorithm, reads may
be concurrent, but writes may not -- CREW stands for
\emph{concurrent-read exclusive-write}. In practice, the important
property of our algorithms is that they avoid concurrent writes (which
can lead to non-determinacy and cache ping-pong effects).



\paragraph{The Parallel Partition Problem}
The parallel partition problem takes an input array
$A = (A[1], A[2], \ldots, A[n])$ of size $n$, and a \defn{decider
  function} $\dec$ that determines for each element $A[i] \in A$
whether or not $A[i]$ is a \defn{predecessor} or a
\defn{successor}. That is, $\dec(A[i]) = 1$ if $A[i]$ is a
predecessor, and $\dec(A[i]) = 0$ if $A[i]$ is a successor. The
behavior of the parallel partition is to reorder the elements in the
array $A$ so that the predecessors appear before the successors. Note
that, in this paper, we will always treat arrays as 1-indexed.

\paragraph{The (Standard) Linear-Space Parallel Partition} The linear-space
implementation of parallel partition consists of two phases
\cite{Blelloch96,AcarBl16}:

\noindent\emph{The Parallel-Prefix Phase: }In this phase, the algorithm first
creates an array $D$ whose $i$-th element $D[i] = \dec(A[i])$. Then the
algorithm constructs an array $S$ whose $i$-th element $S[i] = \sum_{j = 1}^i
D[i]$ is the number of predecessors in the first $i$ elements of $A$. The
transformation from $D$ to $S$ is called a \defn{parallel prefix sum} and can
be performed with $O(n)$ work and $O(\log n)$ span using a simple recursive
algorithm: (1) First construct an array $D'$ of size $n / 2$ with $D'[i] = D[2i
- 1] + D[2i]$; (2) Recursively construct a parallel prefix sum $S'$ of $D'$;
(3) Build $S$ by setting each $S[i] = S'[\lfloor i / 2 \rfloor] + A[i]$ for odd
$i$ and $S[i] = S'[i / 2]$ for even $i$. 

\noindent\emph{The Reordering Phase: }In this phase, the algorithm constructs
an output-array $C$ by placing each predecessor $A[i] \in A$ in position $S[i]$
of $C$. If there are $t$ predecessors in $A$, then the first $t$ elements of
$C$ will now contain those $t$ predecessors in the same order that they appear
in $A$. The algorithm then places each successor $A[i] \in A$ in position $t +
i - S[i]$. Since $i - S[i]$ is the number of successors in the first $i$
elements of $A$, this places the successors in $C$ in the same order that they
appear in $A$. Finally, the algorithm copies $C$ into $A$, completing the
parallel partition.

Both phases can be implemented with $O(n)$ work and $O(\log n)$
span. Like its serial out-of-place counterpart, the algorithm is
stable but not in place. The algorithm uses multiple auxiliary arrays of
size $n$. Kiu, Knowles, and Davis \cite{LiuKn05} were able to reduce
the extra space consumption to $n + p$ under the assumption that the
number of processors $p$ is hard-coded; their algorithm breaks the
array $A$ into $p$ parts and assigns one part to each thread. Reducing
the extra space below $o(n)$ has remained open until now, even when the
number of threads is fixed.

\section{An In-Place Partition Algorithm with Span $O(\log n \log \log n)$}\label{secalg}

In this section, we present an in-place algorithm for parallel
partition with span $O(\log n \log \log n)$. Each thread in the
algorithm requires memory at most $O(\log n)$.

Prior to beginning the algorithm, the first implicit step of the
algorithm is to count the number of predecessors in the array, in
order to determine whether the majority of elements are either
predecessors or successors. Throughout the rest of the section, we
assume without loss of generality that the total number of successors
in $A$ exceeds the number of predecessors, since otherwise their roles
can simply be swapped in the algorithm. Further, we assume for
simplicity that the elements of $A$ are distinct; this assumption is
removed at the end of the section.

\paragraph{Algorithm Outline}
We begin by presenting an overview of the key algorithmic ideas needed
to construct an in-place algorithm.

Consider how to remove the auxiliary array $C$ from the Reordering
Phase. If one attempts to simply swap in parallel each predecessor
$A[i]$ with the element in position $j = S[i]$ of $A$, then the swaps
will almost certainly conflict. Indeed, $A[j]$ may also be a
predecessor that needs to be swapped with $A[S[j]]$. Continuing like
this, there may be an arbitrarily long list of dependencies on the
swaps.

To combat this, we begin the algorithm with a Preprocessing Phase in
which $A$ is rearranged so that every prefix is
\defn{successor-heavy}, meaning that for all $t$, the first $t$
elements contain at least $\frac{t}{4}$ successors. Then we compute
the prefix-sum array $S$, and begin the Reordering Phase. Using the
fact that the prefixes of $A$ are successor-heavy, the reordering can
now be performed in place as follows: (1) We begin by recursively
reordering the prefix $P$ of $A$ consisting of the first $4/5 \cdot n$
elements, so that the predecessors appear before the successors; (2)
Then we simply swap each predecessor $A[i]$ in the final $1/5 \cdot n$
elements with the corresponding element $S[A[i]]$. The fact that the
prefix $P$ is successor-heavy ensures that, after step (1), the final 
$\frac{1}{5} \cdot n$ elements of (the reordered) $P$ are successors. 
This implies in step (2) that for each of the swaps between predecessors $A[i]$
in the final $1/5 \cdot n$ elements and earlier positions $S[A[i]]$, the latter
element will be in the prefix $P$. In other words, the swaps are now conflict
free.

Next consider how to remove the array $S$ from the Parallel-Prefix
Phase. At face value, this would seem quite difficult since the
reordering phase relies heavily on $S$. Our solution is to
\emph{implicitly} store the value of every $O(\log n)$-th element of
$S$ in the ordering of the elements of $A$. That is, we break $A$ into
blocks of size $O(\log n)$, and use the order of the elements in each
block to encode an entry of $S$. (If the elements are not all
  distinct, then a slightly more sophisticated encoding is necessary.)
Moreover, we modify the algorithm for building $S$ to only construct
every $O(\log n)$-th element. The new parallel-prefix sum performs
$O(n / \log n)$ arithmetic operations on values that are implicitly
encoded in blocks; since each such operation requires $O(\log n)$
work, the total work remains linear.

In the remainder of the section, we present the algorithm in
detail, and prove the key properties of each phase of the
algorithm. We also provide detailed pseudocode in Figure
\ref{alg:parallelPartition_prefixsumbased_helpers} and Figure
\ref{alg:parallelPartition_prefixsumbased_main}. The algorithm
proceeds in three phases.

\begin{figure*}
  \caption{Blocked-Prefix-Sum Partition Algorithm: Helper Functions}
  \label{alg:parallelPartition_prefixsumbased_helpers}
  \begin{algorithmic} 
    \Procedure{WriteToBlock}{$A$, $b$, $i$, $v$} \Comment Write value $v$ to the $i$-th block $X_i$ of $A$, where $A = X_1 \circ X_2 \circ \cdots \circ X_{\lfloor n/b \rfloor}$
      \ForAll{$j \in \{1,2,\ldots, \lfloor b/2 \rfloor \}$ in parallel}
      \If{$\mathds{1}_{X_i[2j] < X_i[2j+1]} \not= $ (the $j$-th binary digit of $v$)}
          \State Swap $X_i[2j]$ and $X_i[2j+1]$
        \EndIf
      \EndFor
    \EndProcedure
    \State

    \Procedure{ReadFromBlock}{$A$, $i$, $j$} \Comment Reads the value $v$ stored in $A[i], A[i+1], \ldots, A[j]$
      \If{$j-i=2$}
        \State \Return $\mathds{1}_{A[i] < A[i+1]}$
      \Else
        \State \emph{Parallel-Spawn} $v_0 \gets $ ReadFromBlock($A$, $i$, $i+(j-i)/2$)
        \State \emph{Parallel-Spawn} $v_f \gets $ ReadFromBlock($A$, $i+(j-i)/2+1$, $j$)
        \State \emph{Parallel-Sync}
        \State \Return $v_f\cdot 2^\frac{j-i}{4} + v_0$
      \EndIf
    \EndProcedure
    \State

    \State \textbf{Require: } $A$ has more successors than predecessors
    \State \textbf{Ensure: }  Each prefix of $A$ is ``successor heavy''
    \Procedure{MakeSuccessorHeavy}{$A$, $n$}
      \ForAll{$i \in \{1,2,\ldots,\lfloor n/2 \rfloor\}$ in parallel}
        \If{$A[i]$ is a predecessor and $A[n-i+1]$ is a successor}
          \State Swap $A[i]$ and $A[n-i+1]$
        \EndIf
      \EndFor
      \State MakeSuccessorHeavy($A$, $\lceil n/2 \rceil$)
      \Comment Recurse on $A[1],A[2], \ldots, A[\lceil n/2 \rceil]$
    \EndProcedure
  \end{algorithmic}	
\end{figure*}

\begin{figure*}
  \caption{Blocked-Prefix-Sum Partition Algorithm: Main Functions}
  \label{alg:parallelPartition_prefixsumbased_main}
  \begin{algorithmic} 
    \State \textbf{Require: } Each prefix of $A$ is ``successor heavy''
    \State \textbf{Ensure: }  Each block $X_i$ stores how many predecessors occur in $X_1 \circ X_2 \circ \cdots \circ X_i$
    \Procedure{ImplicitParallelPrefixSum}{A, n}
      \State Pick $b \in \Theta(\log n)$ to be the "block size"
      \State Logically Partition $A$ into blocks, with $A = X_1 \circ X_2 \circ \cdots \circ X_{\lfloor n/b \rfloor}$
      \ForAll{$i \in \{1,2,\ldots,\lfloor n/b \rfloor\}$ in parallel}
        \State $v_i \gets 0$ \Comment{$v_i$ will store number of predecessors in $X_i$ } 
        \ForAll{$a \in X_i$ in serial}
          \If{$a$ is a predecessor}
            \State $v_i \gets v_i + 1$
          \EndIf
        \EndFor
        \State WriteToBlock($A$, $b$, $i$, $v_i$)
        \Comment Now we encode the value $v_i$ in the block $X_i$
      \EndFor
      \State Perform a parallel prefix sum on the values $v_i$ stored in the $X_i$'s
   \EndProcedure
    \State

  \State \textbf{Require: } Each block $X_i$ stores how many predecessors occur in $X_1 \circ X_2 \circ \cdots \circ X_i$
  \State \textbf{Ensure: }  $A$ is partitioned
    \Procedure{Reorder}{$A$, $n$}
      \State $t \gets $ least integer such that $t\cdot b > n\cdot 4/5$
      \State Reorder($A$, $t$)
      \Comment Recurse on $A[1], A[2], \ldots, A[t]$
      \ForAll{$i \in \{t+1, t+2, \ldots, \lfloor n/b \rfloor\}$}
      \State $v_i \gets$ ReadFromBlock($A$, $b\cdot i+1$, $b\cdot(i+1)$) 
        \State Instantiate an array $Y_i$ with $|Y_i| = |X_i| \in \Theta(\log n)$, 
        \State In parallel, set $Y_i[j] \gets 1$ if $X_i[j]$ is a predecessor, and $Y_i[j] \gets 0$ otherwise.
        \State Perform a parallel prefix sum on $Y_i$, and add $v_i$ to each $Y_i[j]$
        \ForAll{$j \in \{1,2,\ldots, b\}$}
          \If{$X_i[j]$ is a predecessor}
            \State Swap $X_i[j]$ and $A[Y_i[j]]$
          \EndIf
        \EndFor
      \EndFor
    \EndProcedure
    \State

    \Procedure{ParallelPartition}{$A$, $n$}
      \State $k \gets$ count number of successors in $A$ in parallel
      \If{$k < n/2$}
        \State Swap the role of successors and predecessors in the algorithm (i.e. change the decider function)
        \State At the end we consider $A'[i] = A[n-i+1]$, the logically reversed array, as output
      \EndIf

      \State MakeSuccessorHeavy($A$, $n$) \Comment \emph{prepreocessing phase}
      \State ImplicitParallelPrefixSum($A$, $n$) \Comment \emph{Implicit Parallel Prefix Sum}
      \State Reorder($A$, $n$) \Comment \emph{In-Place Reordering Phase}
    \EndProcedure
  \end{algorithmic}	
\end{figure*}

\paragraph{A Preprocessing Phase}
The goal of the Preprocessing phase is to make every prefix of $A$
successor-heavy. To perform the Preprocessing phase on $A$, we begin
with a parallel-for-loop: For each $i = 1, \ldots, \lfloor n /
2\rfloor$, if $A[i]$ is a predecessor and $A[n - i + 1]$ is a
successor, then we swap their positions in $A$. To complete the
Preprocessing phase on $A$, we then recursively perform a
Preprocessing phase on $A[1], \ldots, A[\lceil n / 2 \rceil]$.

\begin{lemma}
 The Preprocessing Phase has work $O(n)$ and span $O(\log n)$. At the
 end of the Preprocessing Phase, every prefix of $A$ is
 successor-heavy.
  \label{lem:preprocessingphase}
\end{lemma}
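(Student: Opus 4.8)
The plan is to treat the two claims separately: the work/span bound follows from a one-step recurrence, while the successor-heavy guarantee needs an induction on the array length with a slightly strengthened hypothesis.

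For the resource bounds, note that one call of \textsc{MakeSuccessorHeavy} on a length-$m$ array runs a single parallel-for-loop of $\lfloor m/2\rfloor$ iterations — each doing $O(1)$ work and having $O(1)$ span, and, since the pairs $\{i,\,m-i+1\}$ are disjoint, accessing disjoint memory — and then recurses on a length-$\lceil m/2\rceil$ array. Hence $W(m)\le W(\lceil m/2\rceil)+O(m)$ and $D(m)\le D(\lceil m/2\rceil)+O(1)$, which unroll to $W(n)=O(n)$ and $D(n)=O(\log n)$; the recursion has depth $O(\log n)$, which also gives the claimed $O(\log n)$ per-thread memory.

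For correctness I would prove by induction on $m$ the statement: \emph{if $A[1..m]$ has at most $m/2$ predecessors, then after} \textsc{MakeSuccessorHeavy}$(A,m)$ \emph{every prefix $A[1..t]$ with $1\le t\le m$ contains at least $t/4$ successors}. Taking $m=n$ proves the lemma, because the implicit counting step that precedes the call ensures (w.l.o.g.) that $A$ has at most $n/2$ predecessors. The base case $m\le 1$ is trivial: the hypothesis forces the single element, if any, to be a successor. For the inductive step, the crucial observation is that immediately after the parallel-for-loop, no position $i\le\lfloor m/2\rfloor$ can hold a predecessor while its partner $A[m-i+1]$ holds a successor; equivalently, each predecessor in the first $\lfloor m/2\rfloor$ slots is matched by a predecessor in the last $\lfloor m/2\rfloor$ slots. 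Using this pairing together with the ``$\le m/2$ predecessors'' hypothesis (and handling the unpaired middle slot when $m$ is odd), I get that $A[1..\lceil m/2\rceil]$ now has at most $\lceil m/2\rceil/2$ predecessors — exactly the induction hypothesis for the recursive call. That call makes every prefix of length $\le\lceil m/2\rceil$ successor-heavy; and since it only permutes within $A[1..\lceil m/2\rceil]$, it leaves at least $\lceil m/2\rceil/2\ge m/4\ge t/4$ successors inside $A[1..\lceil m/2\rceil]$, so every longer prefix $A[1..t]$ with $\lceil m/2\rceil<t\le m$ is successor-heavy too.

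The part that takes the most care is the odd-$m$ bookkeeping in the inductive step: the naive ``at most half the predecessors land in the first half'' bound is not quite enough once the lone middle element is counted, and one has to exploit integrality — the hypothesis ``at most $m/2$ predecessors'' really says ``at most $\lfloor m/2\rfloor$'' — to conclude that $A[1..\lceil m/2\rceil]$ still satisfies the predecessor bound needed to recurse. The recurrences and the even case are routine.
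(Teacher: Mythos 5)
Your proof is correct and follows essentially the same route as the paper's: the single parallel-for-loop pairs position $i$ with $m-i+1$ so that afterwards the first half contains at most half the predecessors (equivalently, at least half the successors), which simultaneously makes every long prefix successor-heavy and re-establishes the precondition for the recursive call, with the resource bounds falling out of the obvious recurrences. Your extra care with the odd-length/middle-element bookkeeping is a detail the paper elides (it works with successor counts and $\lceil n/4\rceil \ge \lceil n/2\rceil/2$ instead), but the argument is the same.
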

\begin{proof}
Recall that for each $t \in 1, \ldots, n$, we call the $t$-prefix
$A[1], \ldots, A[t]$ of $A$ successor-heavy if it contains at least
$\frac{t}{4}$ successors.

The first parallel-for-loop ensures that at least half the successors
in $A$ reside in the first $\lceil n / 2 \rceil$ positions, since for
$i = 1, \ldots, \lfloor n / 2 \rfloor$, $A[n - i + 1]$ will only be a
successor if $A[i]$ is also a successor. Because at least half the
elements in $A$ are successors, it follows that the first $\lceil n /
2 \rceil$ positions contain at least $\lceil n / 4\rceil$ successors,
making every $t$-prefix with $t \ge \lceil n / 2 \rceil$
successor-heavy.

After the parallel-for-loop, the first $\lceil n / 2 \rceil$ positions
of $A$ contain at least as many successors as predecessors (since
$\lceil n / 4 \rceil \ge \frac{\lceil n / 2 \rceil}{2}$). Thus we can
recursively apply the argument above in order to conclude that the
recursion on $A[1], \ldots, A[\lceil n / 2 \rceil]$ makes every
$t$-prefix with $t \le \lceil n / 2 \rceil$ successor-heavy. It
follows that, after the recursion, every $t$-prefix of $A$ is
successor-heavy.

Each recursive level has constant span and performs work proportional
to the size of the subarray being considered. The Preprocessing phase
therefore has total work $O(n)$ and span $O(\log n)$.
\end{proof}






\paragraph{An Implicit Parallel Prefix Sum}
Pick a \defn{block-size} $b \in \Theta(\log n)$ satisfying $b \ge 2
\lceil \log (n + 1) \rceil$. Consider $A$ as a series of $\lfloor n /
b \rfloor$ blocks of size $b$, with the final block of size between
$b$ and $2b - 1$. Denote the blocks by $X_1, \ldots, X_{\lfloor n / b
  \rfloor}$.

Within each block $X_i$, we can implicitly store a value in the range
$0, \ldots, n$ through the ordering of the elements:
\begin{lemma}
Given an array $X$ of $2 \lceil \log (n + 1) \rceil$ distinct
elements, and a value $v \in \{0, \ldots, n\}$, one can rearrange the
elements of $X$ to encode the bits of $v$ using work $O(\log n)$ and
span $O(\log \log n)$; and one can then later decode $v$ from $X$
using work $O(\log n)$ and span $O(\log \log n)$.
\label{lem:bitstore}
\end{lemma}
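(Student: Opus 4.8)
The plan is to use disjoint pairs of array entries as bit-slots. Write $m = \lceil \log(n+1) \rceil$, so that $X$ has exactly $2m$ entries and every $v \in \{0,\ldots,n\}$ has a binary expansion $v = \sum_{j=1}^m b_j 2^{j-1}$ with at most $m$ bits. Partition $X$ into the $m$ disjoint pairs $(X[2j-1], X[2j])$ for $j = 1,\ldots,m$. Since the entries of $X$ are distinct, $\mathds{1}[X[2j-1] < X[2j]]$ is a well-defined bit carried by pair $j$, and it can be flipped by swapping the two entries of that pair — an operation that leaves every other pair untouched.

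To encode $v$, I would run one parallel-for-loop over $j \in \{1,\ldots,m\}$: the $j$-th thread reads its pair, and if the bit it currently carries differs from $b_j$ it swaps the pair's two entries. Afterwards pair $j$ carries $b_j$ for every $j$, so $X$ encodes $v$; the whole operation is a permutation of $X$ in which no two threads ever touch the same entry. This costs $O(m) = O(\log n)$ work and $O(1)$ span for the loop body (or $O(\log m) = O(\log\log n)$ once one charges the overhead of launching a parallel-for-loop).

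Decoding must recover $v = \sum_j b_j 2^{j-1}$, and a serial sum of the $m$ bits would have span $\Theta(\log n)$, which is too much; instead I would combine the bits with a balanced divide-and-conquer tree. A recursive routine decodes a contiguous run of $2^\ell$ pairs by spawning, in parallel, a decode of its low half (returning a $2^{\ell-1}$-bit value $v_0$) and of its high half (returning $v_f$), then returning $v_0 + v_f\cdot 2^{2^{\ell-1}}$; the base case reads a single pair's bit $\mathds{1}[X[2j-1] < X[2j]]$. (If $m$ is not a power of two, split unevenly or pad conceptually; this is immaterial.) The recursion tree has $m$ leaves, depth $O(\log m) = O(\log\log n)$, and $O(m)$ internal nodes, and since every intermediate value is at most $2^m - 1 \le n$ it fits in a single machine word, so each combine step — a fixed left shift plus an addition — costs $O(1)$. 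Hence decoding also has work $O(\log n)$ and span $O(\log\log n)$; each leaf reads only its own pair and each internal node reads only the two values passed up by its children, so the routine is EREW, and its recursion depth $O(\log\log n)$ sits comfortably within the in-place memory budget.

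The one point that genuinely needs care is this span argument for decoding: one must verify that the combining tree really has logarithmic depth, that all values manipulated along the way stay $O(\log n)$ bits wide so the shifts and additions are truly constant-time, and that the read pattern never forces two threads onto the same entry. Distinctness of the entries of $X$, which is assumed, is precisely what makes the per-pair comparison bit well-defined in the first place, so no further hypothesis is needed.
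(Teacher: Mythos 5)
Your proposal is correct and follows essentially the same approach as the paper: disjoint adjacent pairs each encode one bit via the comparison $\mathds{1}[x_{2j-1} < x_{2j}]$, writing is a conflict-free parallel-for of swaps, and reading uses a balanced divide-and-conquer combine of depth $O(\log\log n)$ (this is exactly the paper's \textsc{WriteToBlock}/\textsc{ReadFromBlock} routines). Your added care about word-size of intermediate values and the EREW access pattern is consistent with, and slightly more explicit than, the paper's argument.
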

\begin{proof}
Observe that $X$ can be broken into (at least) $\lceil \log (n + 1)
\rceil$ disjoint pairs of adjacent elements $(x_1, x_2), (x_3, x_4),
\ldots$, and by rearranging the order in which a given pair $(x_j,
x_{j + 1})$ occurs, the lexicographic comparison of whether $x_j <
x_{j + 1}$ can be used to encode one bit of information. Values $v \in
[0,n]$ can therefore be read and written to $X$ with work $O(b) =
O(\log n)$ and span $O(\log b) = O(\log \log n)$ using a simple
divide-and-conquer recursive approach to encode and decode the bits of
$v$.
\end{proof}

To perform the Parallel Prefix Sum phase, our algorithm begins by
performing a parallel-for loop through the blocks, and storing in each
block $X_i$ a value $v_i$ equal to the number of predecessors in the
block. (This can be done in place with work $O(n)$ and span $O(\log
\log n)$ by Lemma \ref{lem:bitstore}.)

The algorithm then performs an in-place parallel-prefix operation on
the values $v_1, \ldots, v_{\lfloor n / b \rfloor}$ stored in the
blocks. This is done by first resetting each even-indexed value
$v_{2i}$ to $v_{2i} + v_{2i - 1}$; then recursively performing a
parallel-prefix sum on the even-indexed values; and then replacing
each odd-indexed $v_{2i + 1}$ with $v_{2i + 1} + v_{2i}$, where $v_0$
is defined to be zero.

Lemma \ref{lem:parallelprefix} analyzes the phase:
\begin{lemma}
  The Parallel Prefix Sum phase uses work $O(n)$ and span
  $O(\log n \log \log n)$. At the end of the phase, each $X_i$ encodes
  a value $v_i$ counting the number of predecessors in the prefix
  $X_1 \circ X_2 \circ \cdots \circ X_i$; and each prefix of blocks
  (i.e., each prefix of the form
  $X_1 \circ X_2 \circ \cdots \circ X_i$) is successor-heavy.
\label{lem:parallelprefix}
\end{lemma}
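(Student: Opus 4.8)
The plan is to prove the two halves of the statement separately: first the $O(n)$ work and $O(\log n \log\log n)$ span bound, then the two correctness claims — that each $X_i$ ends up encoding the prefix count, and that every block-aligned prefix remains successor-heavy.

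For the complexity bound I would split the phase into its two sub-steps. The initial step runs a parallel-for over the $\lfloor n/b\rfloor$ blocks; within block $X_i$ it scans the $b = \Theta(\log n)$ elements serially to compute $v_i$ and then calls \textsc{WriteToBlock}, which by Lemma~\ref{lem:bitstore} costs $O(\log n)$ work and $O(\log\log n)$ span. Summed over all blocks this is $O(n)$ work, and since the iterations run in parallel the span is $O(\log n) + O(\log\log n) = O(\log n)$. The recursive parallel-prefix step has recursion depth $O(\log(n/b)) = O(\log n)$; at the level with $m$ surviving values it performs $\Theta(m)$ additions, each of which decodes two block-encoded values, adds them, and re-encodes one value — costing $O(\log n)$ work and $O(\log\log n)$ span per addition by Lemma~\ref{lem:bitstore}, with all additions at a given level done in parallel. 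Hence the work at recursion level $k$ is $O\!\big((n/(b\,2^k))\cdot\log n\big) = O(n/2^k)$ using $b = \Theta(\log n)$, which sums geometrically to $O(n)$; and each level contributes $O(\log\log n)$ span, for $O(\log n \log\log n)$ total. Combining the two sub-steps gives work $O(n)$ and span $O(\log n\log\log n)$.

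For correctness of the values, observe that after the initial step each $X_i$ encodes $p_i$, the number of predecessors in $X_i$ alone. The recursive procedure — reset $v_{2i}\gets v_{2i}+v_{2i-1}$, recursively prefix-sum the even-indexed values, then set $v_{2i+1}\gets v_{2i+1}+v_{2i}$ with $v_0 = 0$ — is exactly the standard recursive parallel prefix sum of Section~\ref{secprelim} applied to the sequence $(v_1,\dots,v_{\lfloor n/b\rfloor})$, so a straightforward induction on the recursion depth shows that at the end $v_i = \sum_{j\le i} p_j$, which is the number of predecessors in $X_1\circ\cdots\circ X_i$. Finally, the successor-heavy claim follows from the observation that every operation in this phase — \textsc{WriteToBlock} and each step of the recursive prefix sum, which swaps pairs of elements only to adjust encoded values — permutes elements strictly within a single block. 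Consequently, for each $i$ the multiset of elements occupying $X_1\circ\cdots\circ X_i$ is unchanged by the phase, so this block-aligned prefix contains exactly as many successors as the corresponding length-$(|X_1|+\cdots+|X_i|)$ prefix of $A$ did at the start of the phase; by Lemma~\ref{lem:preprocessingphase} that prefix was successor-heavy, hence it remains so.

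The main obstacle I expect is bookkeeping rather than conceptual: one must verify that the per-level arithmetic costs only $O(\log n)$ work per value (so the geometric series closes to $O(n)$ and not $O(n\log n)$), which rests entirely on the encode/decode bound of Lemma~\ref{lem:bitstore} together with the choice $b = \Theta(\log n)$ satisfying $b \ge 2\lceil\log(n+1)\rceil$, so that every intermediate count — all of which are at most $n$ — still fits inside a block. The induction for the prefix sum and the ``permutes within a block'' observation are routine.
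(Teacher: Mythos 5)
Your proof is correct and follows essentially the same route as the paper's: charge $O(\log n)$ work and $O(\log\log n)$ span per encoded read/write via Lemma~\ref{lem:bitstore} across the $O(n/\log n)$ values of the standard recursive prefix sum, and observe that the phase only permutes elements within individual blocks, so Lemma~\ref{lem:preprocessingphase} preserves successor-heaviness of every block-aligned prefix. Your version simply spells out the per-level geometric sum and the induction that the paper leaves implicit.
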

\begin{proof}
If the $v_i$'s could be read and written in constant time, then the
prefix sum would take work $O(n / \log n)$ and span $O(\log n)$, since
there are $O(n / \log n)$ $v_i$'s. Because each $v_i$ actually
requires work $O(\log n)$ and span $O(\log \log n)$ to read/write (by
Lemma \ref{lem:bitstore}), the prefix sum takes work $O(n)$ and span
$O(\log n \cdot \log \log n)$.

Once the prefix-sum has been performed, every block $X_i$
encodes a value $v_i$ counting the number of predecessors in the
prefix $X_1 \circ X_2 \circ \cdots \circ X_i$. Moreover, because the
Parallel Prefix Sum phase only rearranges elements within each $X_i$,
Lemma \ref{lem:preprocessingphase} ensures that each prefix of the
form $X_1 \circ X_2 \circ \cdots \circ X_i$ remains successor-heavy.
\end{proof}

\paragraph{In-Place Reordering}
In the final phase of the algorithm, we reorder $A$ so that the
predecessors appear before the successors. Let $P = X_1 \circ X_2
\circ \cdots \circ X_t$ be the smallest prefix of blocks that contains
at least $4/5$ of the elements in $A$. We begin by recursively
reordering the elements in $P$ so that the predecessors appear before
the successors; as a base case, when $|P| \le 5b = O(\log n)$, we
simply perform the reordering in serial.

To complete the reordering of $A$, we perform a parallel-for-loop
through each of the blocks $X_{t + 1}, \ldots, X_{\lfloor n / b \rfloor}$. For each block
$X_i$, we first extract $v_i$ (with work $O(\log n)$ and span $O(\log
\log n)$ using Lemma \ref{lem:bitstore}). We then create an auxiliary
array $Y_i$ of size $|X_i|$, using $O(\log n)$ thread-local
memory. Using a parallel-prefix sum (with work $O(\log n)$ and span
$O(\log \log n)$), we set each $Y_i[j]$ equal to $v_i$ plus the number
of predecessors in $X_i[1], \ldots, X_i[j]$. In other words, $Y_i[j]$
equals the number of predecessors in $A$ appearing at or before
$X_i[j]$.

After creating $Y_i$, we then perform a parallel-for-loop through the
elements $X_i[j]$ of $X_i$ (note we are still within another parallel
loop through the $X_i$'s), and for each predecessor $X_i[j]$, we swap
it with the element in position $Y_i[j]$ of the array $A$. This
completes the algorithm.

\begin{lemma}
 The Reordering phase takes work $O(n)$ and span $O(\log n \log \log
 n)$. At the end of the phase, the array $A$ is fully partitioned.
\end{lemma}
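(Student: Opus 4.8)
The plan is to establish correctness by induction on the recursion depth of \textsc{Reorder}, and then to read off the work and span from one-line recurrences. First I would record the two facts handed to us by Lemma~\ref{lem:parallelprefix}: every prefix of blocks $X_1\circ\cdots\circ X_i$ is successor-heavy, and each $X_i$ encodes $v_i$, the number of predecessors in $X_1\circ\cdots\circ X_i$. Since all of $A$ is successor-heavy, the total number $p$ of predecessors in $A$ satisfies $p\le 3n/4$. As $P=X_1\circ\cdots\circ X_t$ was chosen to hold at least $4n/5$ elements, we get the strict inequality $p\le 3n/4<4n/5\le |P|$; this slack, together with successor-heaviness, is what makes everything else go through.

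For correctness I would argue by induction. The base case $|P|\le 5b$ is the serial reorder and is immediate. For the step: the recursive call \textsc{Reorder}$(A,t)$ is \textsc{Reorder}'s first action, so it runs on the still-untouched blocks $X_1,\dots,X_t$, whose prefixes are prefixes of $A$ and hence successor-heavy; by the inductive hypothesis it moves the $p_P:=|\{\text{predecessors in }P\}|$ predecessors of $P$ to positions $1,\dots,p_P$, leaving positions $p_P+1,\dots,|P|$ occupied by successors. In the subsequent parallel loop, for a predecessor $X_i[j]$ with $i>t$ the index $Y_i[j]$ equals its global rank among the predecessors of $A$, so as we enumerate such predecessors in array order these indices run over exactly $\{p_P+1,\dots,p\}$, each value once. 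Since $p_P<Y_i[j]\le p<|P|$, every target cell lies in the successor region of the just-reordered $P$, and thus each swap exchanges an out-of-$P$ predecessor for an in-$P$ successor. I would then verify EREW-legality of the two nested parallel-for loops (over $i$ and over $j$): the source cells are pairwise distinct and sit at positions $>|P|$, the target cells are pairwise distinct and sit at positions $\le p<|P|$, so sources and targets are disjoint and no cell is touched twice. After the loop, positions $1,\dots,p$ hold predecessors and positions $p+1,\dots,n$ hold successors, so $A$ is fully partitioned.

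For the resource bounds I would set up recurrences. The non-recursive part of \textsc{Reorder}$(A,n)$ is a parallel-for over the $\lfloor n/b\rfloor-t=\Theta((n-|P|)/b)$ blocks past $P$, and each iteration does $O(b)=O(\log n)$ work: a \textsc{ReadFromBlock} (Lemma~\ref{lem:bitstore}), building and prefix-summing the length-$\Theta(\log n)$ array $Y_i$, and the swap loop. This is $O(n-|P|)$ work, so with $|P|\ge 4n/5$ the recurrence $W(n)=W(|P|)+O(n-|P|)$ telescopes to $W(n)=O(n)$ (the single base case at the bottom of the chain contributes only $O(\log n)$). For span, a parallel-for has span (body span)$+O(1)$ in the workflow model, and the body's span is $O(\log\log n)$: \textsc{ReadFromBlock} is a depth-$O(\log b)=O(\log\log n)$ divide-and-conquer by Lemma~\ref{lem:bitstore}, and the prefix sum on $Y_i$ and the swap loop contribute $O(\log\log n)$ and $O(1)$ respectively. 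The recursion depth is $O(\log n)$ because $|P|$ shrinks by a constant factor per level until it is $\Theta(\log n)$, so $S(n)=S(|P|)+O(\log\log n)$ solves to $S(n)=O(\log n\log\log n)$ (the base-case serial reorder costs $O(\log n)$, which is absorbed).

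The two places I expect to need care are: (i) the conflict-freeness of the nested parallel swap loops — this is exactly what forces the $4/5$ threshold and is where the successor-heavy invariant gets spent, and it must be checked carefully since EREW permits no double accesses; and (ii) keeping the span at $O(\log n\log\log n)$ rather than $O(\log^2 n)$, which relies on the observation that every per-block subtask touches only $\Theta(\log n)$ items and hence costs only $O(\log\log n)$ span, and uses the workflow model's parallel-for cost of (body)$+O(1)$ rather than the Cilk caveat of (body)$+O(\log R)$.
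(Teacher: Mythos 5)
Your proof is correct and follows essentially the same route as the paper's: the recursive call leaves $P = P_1 \circ P_2$ with $P_2$ all successors, successor-heaviness plus the $4/5$ threshold guarantees the swap targets land injectively inside $P_2$ (you bound this via $p \le 3n/4 < |P|$, the paper via $|P_2| \ge |P|/4 \ge n/5 \ge |A \setminus P|$ --- the same slack counted from the other side), and the work/span recurrences telescope over the $O(\log n)$ recursion levels with $O(\log\log n)$ span per level. No gaps.
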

\begin{proof}
  After $P$ has been recursively partitioned, it will be of the form
  $P_1 \circ P_2$ where $P_1$ contains only predecessors and $P_2$
  contains only successors. Because $P$ was successor-heavy before the
  recursive partitioning (by Lemma \ref{lem:parallelprefix}), we have
  that $|P_2| \ge |P| / 4$, and thus that
  $|P_2| \ge |X_{t + 1} \circ \cdots \circ X_{\lfloor n / b
    \rfloor}|$.

After the recursion, the swaps performed by the algorithm will swap
the $i$-th predecessor in $X_{t + 1} \circ \cdots \circ X_{\lfloor n /
  b \rfloor}$ with the $i$-th element in $P_2$, for $i$ from $1$ to
the number of predecessors in $X_{t + 1} \circ \cdots \circ X_{\lfloor
  n / b \rfloor}$. Because $|P_2| \ge |X_{t + 1} \circ \cdots \circ
X_{\lfloor n / b \rfloor}|$ these swaps are guaranteed not to conflict
with one-another; and since $P_2$ consists of successors, the final
state of array $A$ will be fully partitioned.

The total work in the reordering phase is $O(n)$ since each $X_i$
appears in a parallel-for-loop at exactly one level of the recursion,
and incurs $O(\log n)$ work. The total span of the reordering phase is
$O(\log n \cdot \log \log n)$, since there are $O(\log n)$ levels of
recursion, and within each level of recursion each $X_i$ in the
parallel-for-loop incurs span $O(\log \log n)$. 
\end{proof}

Combining the phases, the full algorithm has work $O(n)$ and span
$O(\log \log n)$. Thus we have:
\begin{theorem}
  There exists an in-place algorithm using exclusive-read-write
  variables that performs parallel-partition with work $O(n)$ and span
  $O(\log n \cdot \log \log n)$.
  \label{thminplace}
\end{theorem}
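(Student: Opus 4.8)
The plan is to assemble Theorem \ref{thminplace} directly from the three phase-lemmas already proven, since each phase has been analyzed in isolation. First I would observe that the procedure \textsc{ParallelPartition} is the concatenation of an implicit preliminary counting step, \textsc{MakeSuccessorHeavy}, \textsc{ImplicitParallelPrefixSum}, and \textsc{Reorder}, so the total work is the sum of the works of the pieces and the total span is the sum of the spans. The initial count of successors is a parallel sum over $n$ elements, costing $O(n)$ work and $O(\log n)$ span; by Lemma \ref{lem:preprocessingphase} the Preprocessing phase costs $O(n)$ work and $O(\log n)$ span and guarantees every prefix of $A$ is successor-heavy; by Lemma \ref{lem:parallelprefix} the Implicit Parallel Prefix Sum phase costs $O(n)$ work and $O(\log n \log \log n)$ span, preserves the successor-heavy property at the block granularity, and leaves each $X_i$ encoding the predecessor-count of $X_1 \circ \cdots \circ X_i$; and by the final Reordering lemma that phase costs $O(n)$ work and $O(\log n \log \log n)$ span and leaves $A$ fully partitioned. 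Summing, the work is $O(n)$ and the span is $O(\log n \log \log n)$, and correctness follows because the precondition of each phase is exactly the postcondition of the previous one.

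Second, I would separately verify the in-place and EREW requirements, since Theorem \ref{thminplace} asserts these and they are not literally restated in the phase lemmas. For the in-place claim, I would check that no phase allocates an auxiliary array of size $\omega(\polylog n)$: \textsc{MakeSuccessorHeavy} uses only swaps; \textsc{ImplicitParallelPrefixSum} and the block encoding of Lemma \ref{lem:bitstore} only rearrange elements within blocks of size $b = \Theta(\log n)$, using $O(\log n)$ thread-local scratch; and \textsc{Reorder} allocates $Y_i$ of size $|X_i| = \Theta(\log n)$ per thread and otherwise only swaps. I also need the parent-child recursion depth to be $O(\polylog n)$: the three recursions (Preprocessing, the prefix-sum recursion, and \textsc{Reorder}) each shrink the instance by a constant factor per level, so each has depth $O(\log n)$, and the \textsc{ReadFromBlock}/divide-and-conquer encodings add depth $O(\log \log n)$; these compose to $O(\polylog n)$. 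For the EREW claim, I would note that every parallel-for-loop in the pseudocode has iterations touching disjoint index ranges (the loop over $i$ in \textsc{MakeSuccessorHeavy} pairs $i$ with $n-i+1$ over disjoint ranges; the loops over blocks touch disjoint $X_i$; within a block the pair-swaps $(X_i[2j], X_i[2j+1])$ are disjoint), and the conflict-freedom of the final swaps is exactly the content of the Reordering lemma, so no two threads ever access the same location concurrently.

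Finally I would discharge the two simplifying assumptions made at the start of the section. The assumption that successors outnumber predecessors is handled by the explicit count $k$ at the top of \textsc{ParallelPartition}: if $k < n/2$ we swap the roles via the decider function and interpret the reversed array as output, which changes work and span by only constant factors. The assumption that the elements of $A$ are distinct is needed so that the lexicographic comparison $\mathds{1}_{X_i[2j] < X_i[2j+1]}$ encodes a genuine bit; as the excerpt flags ("a slightly more sophisticated encoding is necessary"), I would appeal to whatever replacement encoding the paper gives later in the section — e.g. tagging elements by their original position, or using a three-element comparison gadget — noting only that it preserves the $O(\log n)$-work, $O(\log \log n)$-span bounds of Lemma \ref{lem:bitstore} so the asymptotics are unaffected.

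The main obstacle is not any single calculation but the bookkeeping of the in-place and recursion-depth conditions: the definition of "in-place" in this paper is subtle (thread-local memory $O(\polylog n)$, shareable with children, bounded parent-child depth), and one must check that the nesting of the Reordering recursion inside the prefix-sum recursion inside the per-block encoding recursions never multiplies into a super-polylogarithmic depth or a super-polylogarithmic live memory footprint. I expect that to be the part requiring the most care, whereas the work/span addition and the correctness chaining are essentially immediate from the lemmas.
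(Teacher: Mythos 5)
Your proposal is correct and follows essentially the same route as the paper: the paper likewise obtains the theorem by summing the work and span bounds of the counting step, Lemma \ref{lem:preprocessingphase}, Lemma \ref{lem:parallelprefix}, and the Reordering lemma, and then removes the distinctness assumption via a modified block encoding (Lemma \ref{lem:bitstore2}, which splits into the case where enough adjacent pairs are distinct, reusing Lemma \ref{lem:bitstore}, and the case where enough pairs are equal, in which one element of each pair is overwritten with a bit and later restored from its partner). Your extra bookkeeping on the in-place, recursion-depth, and EREW conditions is consistent with what the paper leaves implicit.
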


\paragraph{Allowing for Repeated Elements}
In proving Theorem \ref{thminplace} we assumed for simplicity that the
elements of $A$ are distinct. To remove this assumption, we conclude
the section by proving a slightly more complex variant of Lemma
\ref{lem:bitstore}, eliminating the requirement that the elements of
the array $X$ be distinct:

\begin{lemma}
Let $X$ be an array of $b = 4 \lceil \log (n + 1) \rceil + 2$
elements. The there is an \emph{encode} function, and a \emph{decode}
function such that:
\begin{itemize}
\item The encode function modifies the array $X$ (possibly overwriting
  elements in addition to rearranging them) to store a value $v \in
  \{0, \ldots, n\}$. The first time the encode function is called on
  $X$ it has work and span $O(\log n)$. Any later times the encode
  function is called on $X$, it has work $O(\log n)$ and span $O(\log
  \log n)$. In addition to being given an argument $v$, the encode
  function is given a boolean argument indicating whether the function
  has been invoked on $X$ before.
\item The decode function recovers the value of $v$ from the modified
  array $X$, and restores $X$ to again be an array consisting of the
  same multiset of elements that it began with. The decode function
  has work $O(\log n)$ and span $O(\log \log n)$.
\end{itemize}
  \label{lem:bitstore2}
\end{lemma}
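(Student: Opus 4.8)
The plan is to adapt the pair-encoding of Lemma~\ref{lem:bitstore} and bolt onto it a case analysis for arrays that are too repetitive to carry $L:=\lceil\log(n+1)\rceil$ ``distinguishable'' pairs. Since $b=4L+2$, view $X$ as $2L+1$ adjacent pairs. The Boolean argument tells \emph{encode} whether this is the first call; the first call is the only one allowed span $O(\log n)$, and it begins with a single serial $O(b)$-work, $O(b)$-span pass that greedily groups the $b$ entries into $b/2$ pairs so as to maximize the number of \emph{unequal} pairs: scanning left to right, maintain a pool of still-unpaired entries with the invariant that they are all equal; a new entry that differs from the pool gets paired off with a pool member (creating an unequal pair), and otherwise it joins the pool. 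The pool ends as $k$ copies of a single value $c$, with $k$ even since $b-k$ counts two entries per unequal pair, so there are $(b-k)/2$ unequal pairs. We branch on whether $k\le 2L+2$.

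\textbf{Generic case ($k\le 2L+2$).} Then $(b-k)/2\ge L$, so the first \emph{encode} call permutes $X$ --- \emph{without overwriting}, hence preserving the multiset --- so that positions $(1,2),(3,4),\ldots,(2L-1,2L)$ are all unequal pairs, and stores bit $i$ of $v$ as $\mathds{1}_{X[2i-1]<X[2i]}$, swapping within a pair to fix a wrong bit. Subsequent \emph{encode} calls merely re-set these $L$ pairs (each a compare-and-maybe-swap), which keeps $X[1]\ne X[2]$; this costs $O(\log n)$ work and $O(1)$ span. \emph{decode} recovers the bits by the same divide-and-conquer as in Lemma~\ref{lem:bitstore} and needs no restoration, since the multiset was never altered.

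\textbf{Degenerate case ($k\ge 2L+3$, hence $k\ge 2L+4$).} Here $c$ is a strict majority; writing $R$ for the sub-multiset of the entries different from $c$, the number of copies of $c$ is at least $k$, so $|R|\le b-k\le 2L-2$. The first \emph{encode} call now genuinely overwrites: it rearranges $X$ so that positions $1$ and $2$ both hold $c$ (serving simultaneously as a mode flag and as a verbatim copy of $c$), positions $3,\ldots,L+2$ hold former copies of $c$ and store bit $i$ by writing $c$ for a $0$ and any fixed value $s\ne c$ for a $1$, and positions $L+3,\ldots,b$ hold the $\le 2L-2$ entries of $R$ together with enough further copies of $c$ to fill up. The $2+L+3L=b$ positions fit exactly, and a short count confirms the slots and the available copies of $c$ balance. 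Later \emph{encode} calls read $c=X[1]$ and re-set positions $3,\ldots,L+2$ ($O(\log n)$ work, $O(1)$ span). \emph{decode} reads $c=X[1]$, extracts bit $i=\mathds{1}_{X[i+2]\ne c}$, and then restores the original multiset by overwriting positions $3,\ldots,L+2$ with $c$; since positions $1,2$ already hold $c$ and positions $L+3,\ldots,b$ are untouched, the resulting count of $c$'s is $2+L+(3L-|R|)=b-|R|$ and $R$ is intact, exactly the original multiset.

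The two cases are distinguished --- by both \emph{decode} and by later \emph{encode} calls, which receive no explicit case indicator --- purely by testing whether $X[1]=X[2]$: the generic case is set up so that this fails and all its operations preserve that, while the degenerate case makes it hold. This detection mechanism, together with reconstructing the \emph{exact} original multiset in the degenerate case (which is why the non-majority entries $R$ are parked and never perturbed, and why a slot is kept literally equal to $c$), is the only real content beyond Lemma~\ref{lem:bitstore}, and getting it right is the step I expect to be the main obstacle; deciding which slot holds what is then routine. The claimed bounds are then immediate: the first \emph{encode} is a serial scan of length $b=O(\log n)$ with span $O(\log n)$; later \emph{encode} calls touch $O(\log n)$ positions fully in parallel; and the only super-constant-span work in \emph{decode} is assembling $O(\log n)$ extracted bits into an integer, which costs $O(\log\log n)$ span via a balanced tree. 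Finally, substituting this lemma for Lemma~\ref{lem:bitstore} throughout the Blocked-Prefix-Sum algorithm preserves its work and span guarantees, since the one span-$O(\log n)$ encode per block is absorbed into phases whose span is already $\Omega(\log n)$.
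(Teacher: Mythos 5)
Your proof is correct and follows essentially the same strategy as the paper's: split into a case with at least $\lceil\log(n+1)\rceil$ unequal pairs (encode by intra-pair order, so the multiset is never altered) versus a degenerate, overly repetitive case (overwrite-and-restore), with the two regimes distinguished at decode time by testing whether $X[1]=X[2]$. The only substantive difference is in the degenerate case, where the paper keeps $\lceil\log(n+1)\rceil+1$ equal-valued \emph{pairs} $(x_{2j-1},x_{2j})$ and restores each overwritten $x_{2j}$ from its partner $x_{2j-1}$ --- so no global majority element is needed --- whereas you route all the bookkeeping through a single majority value $c$ produced by your greedy pairing; both accountings are valid and yield the stated work and span bounds.
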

\begin{proof}
Consider the first $b$ letters of $X$ as a sequence of pairs, given by
$(x_1, x_2), \ldots, (x_{b - 1}, x_b)$. If at least half of the pairs
$(x_i, x_{i + 1}$ satisfy $x_i \neq x_{i + 1}$, then the encode
function can reorder those pairs to appear at the front of $X$, and
then use them to encode $v$ as in Lemma \ref{lem:bitstore}. Note that
the reordering of the pairs will only be performed the first time that
the encode function is invoked on $X$. Later calls to the encode
function will have work $O(\log n)$ and span $O(\log \log n)$, as in
Lemma \ref{lem:bitstore}.

If, on the other hand, at least half the pairs consist of equal-value
elements $x_i = x_{i + 1}$, then we can reorder the pairs so that the
first $\lceil \log (n + 1) \rceil + 1$ of them satisfy this
property. (This is only done on the first call to encode.) To encode a
value $v$, we simply explicitly overwrite the second element in each
of the pairs $(x_3, x_4), (x_5, x_6), \ldots$ with the bits of $v$,
overwriting each element with one bit. The reordering performed by the
first call to encode has work and span $O(\log n)$; the writing of
$v$'s bits can then be performed in work $O(\log n)$ and span $O(\log
\log n)$ using a simple divide-and-conquer approach.

To perform a decode and read the value $v$, we check whether $x_1 =
x_2$ in order to determine which type of encoding is being used, and
then we can unencode the bits of $v$ using work $O(\log n)$ and span
$O(\log \log n)$; if the encoding is the second type (i.e., $x_1 =
x_2$), then the decode function also restores the elements $x_2, x_4,
x_6, \ldots$ of the array $X$ as it extracts the bits of $v$. Note
that checking whether $x_1 = x_2$ is also used by the encode function
each time after the first time it is called, in order determine which
type of encoding is being used.
\end{proof}

The fact that the first call to the encode function on each $X_i$ has
span $O(\log n)$ (rather than $O(\log \log n)$) does not affect the
total span of our parallel-partition algorithm, since this simply adds
a step with $O(\log n)$-span to the beginning of the Parallel Prefix
phase. Lemma \ref{lem:bitstore2} can therefore used in place of Lemma
\ref{lem:bitstore} in order to complete the proof of Theorem
\ref{thminplace} for arrays $A$ that contain duplicate elements.

\section{A Cache-Efficient Partition Algorithm}\label{sec:smoothing}


In this section we present the \defn{Smoothed Striding Algorithm},
which exhibits provably optimal cache behavior (up to small-order
factors). The Smoothed Striding Algorithm is fully in-place and has
polylogarithmic span. In particular, this means that the total amount
of auxiliary memory allocated at a given moment in the execution never
exceeds $\polylog n$ per active worker.


\paragraph{Modeling Cache Misses}
We treat memory as consisting of fixed-size cache lines, each of some
size $b$. Each processor is assumed to have a small cache of
$\operatorname{polylog}{n}$ cache lines.  A cache miss occurs on a
processor when the line being accessed is not currently in cache, in
which case some other line is evicted from cache to make room for the
new entry.  Each cache is managed with a LRU (Least Recently Used)
eviction policy.

We assume that threads are scheduled using work stealing \cite{AcarBl00},
and that the work-stealing itself has no cache-miss overhead. Note
that caches belong to processors, not threads, meaning that when a
processor takes a new thread (i.e., performs work stealing), the
processor's cache contents are a function of what the processor was
previously executing. In order to keep our analysis of cache misses
independent of the number of processors, we will ignore the cost of
warming up each processor's cache. In particular, if there are
$\polylog n$ global variables that each processor must access many
times, we do not consider the initial $p \cdot \polylog n$ cost of
loading those global variables into the caches (where $p$ is the
number of processors). In practice, $p \ll n$ on large inputs, making
the cost of warming up caches negligible.

Although each cache is managed with LRU eviction, we may assume
for the sake of analysis that each cache is managed by the optimal
off-line eviction strategy OPT (i.e. Furthest in the Future). This is
because, up to resource augmentation, LRU eviction is $(1 + 1/\polylog
n)$-competitive with OPT. Formally this is due to the following
theorem by Sleator and Tarjan \cite{SleatorTa85}:
\begin{theorem}
  LRU operating on a cache of size $K\cdot M$ for some $K>1$ will incur at most
  $1+\frac{1}{K-1}$ times the number of times cache misses of OPT operating on
  a cache of size $M$, for the same series of memory accesses.
  \label{thm:augmentation}
\end{theorem}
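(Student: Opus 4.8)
The statement is the classical Sleator--Tarjan resource-augmentation bound for LRU, so the plan is the standard phase-based amortized argument; write $h = KM$ for LRU's cache size and $k = M$ for OPT's (so $h \ge k$, in fact $h > k$). First I would partition the request sequence $\sigma$ into consecutive \emph{phases} $P_1, P_2, \ldots$, where $P_i$ is the maximal block of requests, beginning immediately after $P_{i-1}$, that references at most $h$ distinct pages. Every complete phase then references exactly $h$ distinct pages, and the request immediately following $P_i$ is to a page not referenced anywhere in $P_i$. The first sub-claim (the LRU upper bound) is that LRU faults at most $h$ times during each phase, i.e.\ at most once per distinct page: if LRU faulted on a page $x$ twice within one phase, then $x$ was evicted between those two faults, and under the LRU policy evicting $x$ requires that $h$ distinct pages other than $x$ were referenced since $x$'s most recent access; that forces $\ge h+1$ distinct pages into the phase, contradicting the phase definition. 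Hence $\mathrm{LRU}(\sigma) \le h \cdot (\#\text{phases})$.

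The second sub-claim (the OPT lower bound) is that OPT incurs at least $h - k + 1$ faults attributable to each complete phase $P_i$. Here one uses the fact that OPT's cache holds at most $k$ pages at the start of $P_i$, so among the $h$ distinct pages referenced in $P_i$ at least $h-k$ of them cause a fault inside $P_i$. To squeeze out the extra ``$+1$'', bring in the boundary page $p$ that is the first request of $P_{i+1}$ (which, as noted, is new relative to $P_i$): either OPT faults on $p$ at the start of $P_{i+1}$, in which case that fault is charged to $P_i$ and gives the $(h-k)+1$ count; or $p$ already sat in OPT's cache throughout $P_i$ (OPT does not prefetch, and $p$ is not requested in $P_i$), which means OPT effectively had only $k-1$ usable slots for $P_i$'s $h$ distinct pages and therefore faulted $\ge h - (k-1) = h-k+1$ times strictly inside $P_i$. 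Doing the charging so that each OPT fault is charged to at most one phase, one gets $\mathrm{OPT}(\sigma) \ge (h-k+1)\cdot(\#\text{complete phases})$, up to a lower-order additive term coming from the last (incomplete) phase and the initial configuration (this is why the bound is cleanest under the usual assumption that both caches start empty, in which case the first phase alone already gives OPT $\ge h$ faults).

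Combining the two sub-claims,
$$\mathrm{LRU}(\sigma) \;\le\; h\cdot(\#\text{phases}) \;\le\; \frac{h}{h-k+1}\,\mathrm{OPT}(\sigma) + O(h),$$
and then the elementary estimate
$$\frac{h}{h-k+1} \;=\; \frac{KM}{KM-M+1} \;\le\; \frac{KM}{M(K-1)} \;=\; \frac{K}{K-1} \;=\; 1+\frac{1}{K-1}$$
finishes the bound (absorbing or ignoring the $O(h)$ additive term, which is negligible in the regime where this theorem is applied, since there the total number of cache misses is $\Theta(m) = \omega(1)$). I expect the only genuinely delicate step to be the charging argument in the OPT lower bound: getting the ``$h-k+1$'' rather than merely ``$h-k$'' per phase while ensuring no OPT fault is counted for two different phases, and cleanly accounting for the boundary page and the first/last phases. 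Steps 1, 2, and the final arithmetic are routine.
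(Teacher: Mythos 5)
The paper offers no proof of this statement at all---it is imported as a black box with a citation to Sleator and Tarjan---so there is nothing in-paper to compare your argument against; what you have written is the standard phase-partitioning proof, and it is essentially correct. Your LRU sub-claim is airtight as stated. For the OPT sub-claim, the double-counting worry you flag is real but has a standard resolution that you should make explicit rather than just assert: for each phase $P_i$ beyond the first, count only OPT's faults on the $h-1$ distinct pages of $P_i$ \emph{other than its first request} $q$. Just after $q$ is served, OPT's cache holds $q$ plus at most $k-1$ other pages, so at least $(h-1)-(k-1)=h-k$ of those pages fault inside $P_i$; if moreover the boundary page $p$ (first request of $P_{i+1}$, with $p\neq q$ since $p$ is not referenced in $P_i$) sits in OPT's cache throughout $P_i$, it occupies one of those $k-1$ slots and the count improves to $h-k+1$, while otherwise OPT's fault on $p$ is charged to $P_i$. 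With this accounting a fault on a phase's first request is only ever charged to the preceding phase, so no fault is counted twice. Two further remarks: your version of the bound carries an additive $O(h)$ term that the theorem as stated in the paper omits; this matches the original Sleator--Tarjan formulation, and in the paper's regime (caches of $\polylog n$ lines versus $\Theta(n/b)$ total misses) it is harmlessly absorbed into the $1+1/\polylog n$ slack the surrounding discussion already tolerates. The closing arithmetic $\frac{KM}{KM-M+1}\le\frac{KM}{KM-M}=1+\frac{1}{K-1}$ is correct for $K>1$.
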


Recall that each processor has a cache of size $\log^c n$ for $c$ a
constant of our choice.  Up to changes in $c$ LRU incurs no more than
a $1+\frac{1}{\polylog{n}}$ factor more cache misses
than OPT incurs. Thus, up to a $1 + \frac{1}{\polylog(n)}$
multiplicative change in cache misses, and a $\polylog(n)$ factor change in
cache size, we may assume without loss of generality that cache
eviction is performed by OPT.

Because each processor's cache is managed by OPT (without loss of
generality), we can assume that each processor \defn{pins} certain
small arrays to cache (i.e., the elements of those arrays are never
evicted). In fact, this is the only property of OPT we will use; that
is, our analyses will treat non-pinned contents of the cache as being
managed via LRU.

\paragraph{The Strided Algorithm \cite{FrancisPa92}}
The Smoothed Striding Algorithm borrows several structural ideas
from a previous algorithm of Francis and Pannan
\cite{FrancisPa92}, which we call the Strided Algorithm. The
Strided Algorithm is designed to behave well on random arrays
$A$, achieving span $\tilde{O}(n^{2/3})$ and exhibiting only $n/b
+ \tilde{O}(n^{2/3} / b)$  cache misses on such inputs. On
worst-case inputs, however, the Strided Algorithm has span
$\Omega(n)$ and incurs $n/b + \Omega(n/b)$ cache misses. Our
algorithm, the Smoothed Striding Algorithm, builds on the Strided
Algorithm by randomly perturbing the internal structure of the
original algorithm; in doing so, we are able to provide provable
performance guarantees for arbitrary inputs, and to add a
recursion step that was previously impossible.

The \defn{Strided Algorithm} consists of two steps: \\
\textbf{The Partial Partition Step.} \\
Let $g \in \mathbb{N}$ be a parameter, and assume for simplicity
that $gb \mid n$. Logically partition the array $A$ into $n/(gb)$
chunks $C_1, \ldots, C_{n / (gb)}$, each consisting of $g$ cache
lines of size $b$. For $i \in \{1, 2, \ldots, g\}$, define $P_i$
to consist of the $i$-th cache line from each of the chunks $C_1,
\ldots, C_{n / (gb)}$. One can think of the $P_i$'s as forming a
strided partition of array $A$, since consecutive cache lines in
$P_i$ are always separated by a fixed stride of $g - 1$ other
cache lines.\\
The first step of the Strided Algorithm is to perform an in-place
serial partition on each of the $P_i$'s, rearranging the
elements within the $P_i$ so that the predecessors come first.
This step requires work $\Theta(n)$ and span $\Theta(n/g)$.\\
\textbf{The Serial Cleanup Step. }For each $P_i$, define
the \defn{splitting position} $v_i$ to be the position in $A$
of the first successor in (the already partitioned) $P_i$.
Define $v_{\text{min}} = \min\{v_1, \ldots, v_{g}\}$ and define
$v_{\text{max}} = \max\{v_1, \ldots, v_{g}\}$. \\
The second step of the Strided Algorithm is to perform a serial partition on the
sub-array $A[v_{\text{min}}],\ldots, A[v_{\text{max}}-1]$. This
step has no parallelism, and thus has work and span of $\Theta(v_{\text{max}} -
v_{\text{min}})$. 

In general, the lack of parallelism in the Serial Cleanup step
results in an algorithm with linear-span (i.e., no parallelism
guarantee).  When the number of predecessors in each of the
$P_i$'s is close to equal, however, the quantity $v_{\text{max}}
- v_{\text{min}}$ can be much smaller than $\Theta(n)$. For
example, if $b = 1$, and if each element of $A$ is selected
independently from some distribution, then one can use Chernoff
bounds to prove that with high probability in $n$,
$v_{\text{max}} - v_{\text{min}} \le O(\sqrt{n \cdot g \cdot \log
n})$.  The full span of the algorithm is then $\tilde{O}(n/g +
\sqrt{n \cdot g})$, which optimizes at $g = n^{1/3}$ to
$\tilde{O}(n^{2/3})$. Since the Partial Partition Step incurs
only $n / b$ cache misses, the full algorithm incurs $n +
\tilde{O}(n^{2/3})$ cache misses on a random array $A$.

Using Hoeffding's Inequality in place of Chernoff bounds, one can
obtain analogous bounds for larger values of $b$; in particular for $b
\in \operatorname{polylog}(n)$, the optimal span remains
$\tilde{O}(n^{2/3})$ and the number of cache misses becomes $n / b +
\tilde{O}(n^{2/3} / b)$ on an array $A$ consisting of randomly sampled
elements.\footnote{The original algorithm of Francis and Pannan
  \cite{FrancisPa92} does not consider the cache-line size $b$. Frias
  and Petit later introduced the parameter $b$ \cite{Frias08}, and
  showed that by setting $b$ appropriately, one obtains an algorithm
  whose empirical performance is close to the state-of-the-art.}


\paragraph{The Smoothed Striding Algorithm}
To obtain an algorithm with provable guarantees for all inputs $A$, we
randomly perturb the internal structure of each of the $P_i$'s. Define
$U_1, \ldots, U_{g}$ (which play a role analogous to $P_1,
\ldots, P_g$ in the Strided Algorithm) so that each $U_i$ contains one
randomly selected cache line from each of $C_1, \ldots, C_{n /
  gb}$ (rather than containing the $i$-th cache line of each
$C_j$). This ensures that the number of predecessors in each $U_i$ is
a sum of independent random variables with values in $\{0, 1, \ldots,
n/g\}$.

By Hoeffding's Inequality, with high probability in $n$, the number of
predecessors in each $U_i$ is tightly concentrated around $\frac{\mu
  n}{g}$, where $\mu$ is the fraction of elements in $A$ that are
predecessors. It follows that, if we perform in-place partitions of
each $U_i$ in parallel, and then define $v_i$ to be the position in
$A$ of the first successor in (the already partitioned) $U_i$, then
the difference between $v_{\text{min}} = \min_i v_i$ and
$v_{\text{max}} = \max_i v_i$ will be small (regardless of the input array
$A$!).

Rather than partitioning $A[v_{\text{min}}],\ldots,
A[v_{\text{max}}-1]$ in serial, the Smoothed Striding Algorithm simply
recurses on the subarray. Such a recursion would not have been
productive for the original Strided Algorithm because the strided
partition $P_1', \ldots, P_g'$ used in the recursive subproblem would
satisfy $P_1' \subseteq P_1, \ldots, P_g' \subseteq P_g$ and thus each
$P_i'$ is already partitioned. That is, in the original Strided
Algorithm, the problem that we would recurse on is a worst-case input
for the algorithm in the sense that the partial partition step makes
no progress.

The main challenge in designing the Smoothed Striding Algorithm
becomes the construction of $U_1, \ldots, U_{g}$ without
violating the in-place nature of the algorithm. A natural
approach might be to store for each $U_i, C_j$ the index of the
cache line in $C_j$ that $U_i$ contains. This would require the
storage of $\Theta(n / b)$ numbers as metadata, however,
preventing the algorithm from being in-place. To save space, the
key insight is to select a random offset $X_j \in \{1, 2, \ldots,
g\}$ within each $C_j$, and then to assign the $(X_j + i \pmod g)
+ 1$-th cache line of $C_j$ to $U_i$ for $i \in \{1, 2, \ldots,
g\}$. This allows for us to construct the $U_i$'s using only
$O(n/(gb))$ machine words storing the metadata $X_1, \ldots, X_{n
/ (gb)}$. By setting $g$ to be relatively large, so that $n/(gb)
\le \polylog(n)$, we can obtain an in-place algorithm that incurs
$(1 + o(1))n/b$ cache misses.

The recursive structure of the Smoothed Striding Algorithm allows for
the algorithm to achieve polylogarithmic span. As an alternative
to recursing, one can also use the Blocked-Prefix-Sum Partition
Algorithm from Section \ref{secalg} in order to partition
$A[v_{\text{min}}], \ldots, A[v_{\text{max}} - 1]$. This results
in an improved span (since the algorithm from Section
\ref{secalg} has span only $O(\log n \log \log n)$), while still
incurring only $(1 + o(1))n/b$ cache misses (since the
cache-inefficient algorithm from Section \ref{secalg} is only
used on a small subarray of $A$). We analyze both the recursive
version of the Smoothed Striding Algorithm, and the version which
uses as a final step the Blocked-Prefix-Sum Partition Algorithm; one
significant advantage of the recursive version is that it is
simple to implement in practice.

\paragraph{Formal Algorithm Description} Let $b < n$ be the size
of a cache line, let $A$ be an input array of size $n$, and let
$g$ be a parameter. (One should think of $g$ as being relatively
large, satisfying $n/(gb) \le \polylog(n)$.) We assume for
simplicity that that $n$ is divisible by $gb$, and we define $s =
n/(gb)$. \footnote{This assumption can be made without loss of
  generality by treating $A$ as an array of size $n' = n + {(gb -
  n \pmod {gb})}$, and then treating the final $gb - n \pmod
  {gb}$ elements of the array as being successors (which
  consequently the algorithm needs not explicitly access). Note
  that the extra $n' - n$ elements are completely virtual,
  meaning they do not physically exist or reside in memory.


}

In the \defn{Partial Partition Step} the algorithm partitions the
cache lines of $A$ into $g$ sets $U_1, \ldots, U_{g}$ where each
$U_i$ contains $s$ cache lines, and then performs a serial
partition on each $U_i$ in parallel over the $U_i$'s. To
determine the sets $U_1, \ldots, U_{g}$, the algorithm uses as
metadata an array $X = X[1], \ldots, X[s]$, where each $X[i] \in
\{1, \ldots, g\}$. More specifically the algorithm does the
following:

Set each of $X[1], \ldots, X[s]$ to be uniformly random and
independently selected elements of $\{1, 2, \ldots, g\}$. For each $i \in
\{1, 2, \ldots, g\}$, $j \in \{1, 2, \ldots, s\}$, define
$$G_i(j) = (X[j] + i \pmod g) + (j - 1)g + 1.$$ Using this
terminology, we define each $U_i$ for $i \in \{1, \ldots, g\}$ to
contain the $G_i(j)$-th cache line of $A$ for each $j \in \{1, 2,
\ldots, s\}$. That is, $G_i(j)$ denotes the index of the $j$-th
cache line from array $A$ contained in $U_i$.

Note that, to compute the index of the $j$-th cache line in $U_i$,
one needs only the value of $X[j]$. Thus the only metadata needed by
the algorithm to determine $U_1, \ldots, U_g$ is the array
$X$. If $|X| = s = \frac{n}{gb} \le \polylog(n)$, then
the algorithm is in place.
  
The algorithm performs an in-place (serial) partition on each
$U_i$ (and performs these partitions in parallel with one
another). In doing so, the algorithm, also collects
$v_{\text{min}}=\min_i{v_i}$, $v_{\text{max}}=\max_i{v_i}$, where
each $v_i$ with $i \in \{1, \ldots, g\}$ is defined to be the index
of the first successor in $A$ (or $n$ if no such successor
exists).\footnote{One can calculate $v_{\text{min}}$ and
  $v_{\text{max}}$ without explicitly storing each of $v_1, \ldots,
  v_{g}$ as follows. Rather than using a standard $g$-way parallel
  for-loop to partition each of $U_1, \ldots, U_{g}$, one can
  manually implement the parallel for-loop using a recursive
  divide-and-conquer approach. Each recursive call in the
  divide-and-conquer can then simply collect the maximum and minimum
  $v_i$ for the $U_i$'s that are partitioned within that recursive
  call. This adds $O(\log n)$ to the total span of the Partial
  Partition Step, which does not affect the overall span
  asymptotically. 
}

The array $A$ is now ``partially partitioned", i.e. $A[i]$ is a
predecessor for all $i \le v_{\text{min}}$, and $A[i]$ is a successor
for all $i > v_{\text{max}}$.

The second step of the Smoothed Striding Algorithm is to complete the
partitioning of $A[v_{\text{min}} + 1], \ldots, A[v_{\text{max}}]$. This can be done
in one of two ways: The \defn{Recursive Smoothed Striding Algorithm}
partitions $A[v_{\text{min}} + 1], \ldots, A[v_{\text{max}}]$ recursively using the
same algorithm (and resorts to a serial base case when the subproblem
is small enough that $g \le O(1)$); the \defn{Hybrid Smoothed Striding
  Algorithm} partitions $A[v_{\text{min}} + 1], \ldots, A[v_{\text{max}}]$ using the
in-place algorithm given in Theorem \ref{thminplace} with span $O(\log
n \log \log n)$. In general, the Hybrid algorithm yields better
theoretical guarantees on span than the recursive version; on the
other hand, the recursive version has the advantage that it is
simple to implement as fully in-place, and still achieves
polylogarithmic span. We analyze both algorithms in this section.

Detailed pseudocode for the Recursive Smoothed Striding Algorithm can
be found in Figure \ref{alg:parallelPartition_smoothedStriding}.

\begin{figure*}
  \caption{Smoothed Striding Algorithm}
  \label{alg:parallelPartition_smoothedStriding}
  \begin{algorithmic}
    \State \textbf{Recall:} 
    \State $A$ is the array to be partitioned, of length $n$. 
    \State We break $A$ into chunks, each consisting of $g$ cache lines of size $b$.
    \State We create $g$ groups $U_1,\ldots, U_g$ that each contain a single cache line from each chunk,
    \State $U_i$'s $j$-th cache line is the $(X[j]+i \bmod g + 1)$-th cache line in the $j$-th chunk of $A$.
    \State

    \Procedure{Get Block Start Index}{$X$, $g$, $b$, $i$, $j$}
      \Comment This procedure returns the index in $A$ of the start of $U_i's$ $j$-th block.
      \State\Return $b\cdot ((X[j] + i \bmod g) +(j-1)\cdot g)+1$
    \EndProcedure
    \State

    \Procedure{ParallelPartition}{$A$, $n$, $g$, $b$}
      \If{$g<2$}
        \State serial partition $A$
      \Else
        \For{$j \in \{1,2,\ldots,n/(gb)\}$}
          \State $X[j] \gets$ a random integer from $[1,g]$ 
        \EndFor

        \ForAll{$ i \in \{1,2,\ldots,g\}$ in parallel} \Comment We perform a serial partition on all $U_i$'s in parallel

          \State low $\gets$ GetBlockStartIndex($X$,$g$,$b$,$i$,$1$)
          \Comment low $\gets$ index of the first element in $U_i$
          \State high $\gets$ GetBlockStartIndex($X$,$g$,$b$,$i$,$n/(gb)$) + $b-1$
          \Comment high $\gets$ index of the last element in $U_i$

          \While{low $<$ high} 
            \While{A[low] $\leq$ pivotValue}
              \State low $\gets$ low$+1$
              \If{low $\bmod b \equiv 0$ } 
                \Comment Perform a block increment once low reaches the end of a block
                \State $k \gets $ number of block increments so far (including this one)
                \State low $\gets$ GetBlockStartIndex($X$,$g$,$b$,$i$,$k$)
                \Comment Increase low to start of block $k$ of $G_i$
              \EndIf
            \EndWhile
            \While{A[high] $>$ pivotValue}
              \State high $\gets$ high$-1$
              \If{high $\bmod b \equiv 1$} 
                \Comment Perform a block decrement once high reaches the start of a block
                \State $k \gets $ number of block decrements so far (including this one)
                \State $k' \gets n/(gb) - k$
                \State high $\gets$ GetBlockStartIndex($X$, $g$,$b$,$i$,$k'$) $+b-1$
                \Comment Decrease high to end of block $k'$ of $G_i$
              \EndIf
            \EndWhile
            \State Swap $A[\text{low}]$ and $A[\text{high}]$
          \EndWhile
        \EndFor
        \State Recurse on $A[v_{min}],\ldots,A[v_{max}-1]$
      \EndIf
    \EndProcedure
  \end{algorithmic}	
\end{figure*}

\paragraph{Algorithm Analysis} Our first proposition analyzes the Partial Partition Step.
\begin{proposition}
  \label{prop:generalResult}
  
  Let $\epsilon \in (0, 1/2)$ and $\delta \in (0, 1/2)$ such that
  $\epsilon \ge 1/\poly(n)$ and $\delta \ge 1/\polylog(n)$. 
  Suppose $s > \frac{\ln (n/\epsilon)}{\delta^2}$, and that each processor has
  a cache of size at least $s + c$ for a sufficiently large constant
  $c$.

  Then the Partial-Partition Algorithm achieves work $O(n)$; achieves
  span $O(b \cdot s)$; incurs $\frac{s+n}{b} + O(1)$ cache
  misses; and guarantees that with probability at least $1 -
  \epsilon$,
  $$v_{\text{max}}-v_{\text{min}} < 4 n \delta.$$
\end{proposition}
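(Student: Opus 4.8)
The work, span, and cache-miss bounds are fairly direct consequences of the step's structure, so I would handle the tail bound on $v_{\max}-v_{\min}$ first. Write $\mu$ for the fraction of predecessors in $A$, and let $N_i$ denote the number of predecessors in $U_i$. The key point is that, since $U_i$ takes from chunk $C_j$ the cache line picked out by the single random value $X[j]$, and $X[1],\dots,X[s]$ are independent, we can write $N_i=\sum_{j=1}^{s}Y_{ij}$ where $Y_{ij}$ is the number of predecessors in a uniformly random cache line of $C_j$; the $Y_{ij}$ are independent over $j$, each lies in $[0,b]$, and $\E[N_i]=\tfrac1g\sum_j(\#\text{ predecessors in }C_j)=\mu n/g=\mu s b$. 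Hoeffding's inequality then gives, for each fixed $i$,
\[
\Pr\bigl[\,|N_i-\mu s b|>b s\delta\,\bigr]\;\le\;2\exp\!\bigl(-2(bs\delta)^2/(sb^2)\bigr)\;=\;2e^{-2s\delta^2}.
\]
A union bound over the $g\le n$ indices $i$, together with the hypothesis $s>\ln(n/\epsilon)/\delta^2$ (so that $2s\delta^2>\ln(n^2/\epsilon^2)$), bounds the probability that \emph{some} $N_i$ deviates from $\mu s b$ by more than $b s\delta$ by $2ge^{-2s\delta^2}<2g\epsilon^2/n^2\le 2\epsilon^2\le\epsilon$. Let $\mathcal E$ be the complementary event, on which $|N_i-\mu s b|\le b s\delta$ holds for all $i$ simultaneously.

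Next I would translate concentration of the $N_i$'s into control of $v_{\max}-v_{\min}$. After the in-place serial partition of $U_i$, reading the elements of $U_i$ in the order they occur in $A$ gives $N_i$ predecessors followed by $sb-N_i$ successors, so $v_i$ is the $A$-position of the $(N_i+1)$-st element of $U_i$ (and $v_i=n$ when $N_i=sb$). Since the $j$-th cache line of $U_i$ sits inside the $j$-th chunk of $A$, and the $j$-th chunk is an interval of $A$ of length $gb$, the $(N_i+1)$-st element of $U_i$ lies in chunk $\lceil (N_i+1)/b\rceil$, which forces $|v_i-gN_i|\le gb$. Hence
\[
v_{\max}-v_{\min}\;\le\;g\bigl(\max_i N_i-\min_i N_i\bigr)+2gb,
\]
and on $\mathcal E$ the spread of the $N_i$'s is at most $2bs\delta$, so $v_{\max}-v_{\min}\le 2gbs\delta+2gb<4gbs\delta=4n\delta$. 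The final strict inequality is where the hypothesis re-enters a second time: it needs only $s\delta>1$, which follows from $s>\ln(n/\epsilon)/\delta^2>1/(2\delta^2)$ and $\delta<1/2$.

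The remaining bounds are routine. Sampling $X$ costs $O(s)$ work; the step then runs $g$ serial two-pointer partitions in parallel, one per $U_i$, each on $sb$ elements, so they contribute $O(gsb)=O(n)$ work and $O(sb)$ span apiece, and implementing the parallel-for as a divide-and-conquer that also reduces the $v_i$'s down to $v_{\min}$ and $v_{\max}$ adds $O(\log n)$ to the span, which is absorbed since $s=\Omega(\log n)$; the total span is thus $O(bs)$. For cache misses, $U_1,\dots,U_g$ partition the $n/b$ cache lines of $A$, and in the partition of a single $U_i$ the low pointer sweeps $U_i$'s cache lines in increasing order while the high pointer sweeps them in decreasing order, so (with a cache of size $\ge s+c$ managed by OPT) each is fetched exactly once; summing over $i$ gives $n/b$ misses for the data, at most $\lceil s/b\rceil\le s/b+1$ for the metadata array $X$, and $O(1)$ for the constantly many scalar variables, for $\tfrac{s+n}{b}+O(1)$ in total.

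The part that needs genuine care — and where I expect most of the work to go — is the middle step: one must pick the per-$U_i$ deviation threshold $bs\delta$ small enough that the Hoeffding union bound over all $g$ of the $U_i$ stays below $\epsilon$, yet, after accounting for both the factor-$g$ amplification and the additive $2gb$ that the strided layout introduces when passing from $N_i$ to $v_i$, still large enough that $2gbs\delta+2gb$ comes in strictly under $4n\delta$. Both of these requirements are governed precisely by the bound $s>\ln(n/\epsilon)/\delta^2$, so the delicate point is checking that this single hypothesis is simultaneously strong enough for both.
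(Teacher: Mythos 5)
Your proposal is correct and follows essentially the same route as the paper's proof: Hoeffding applied to the predecessor count of each $U_i$ (you use the unnormalized count $N_i\in[0,sb]$ as a sum of independent $[0,b]$-valued variables, where the paper uses the normalized fraction $\mu_i$ as an average of $[0,1]$-valued variables — the same bound after rescaling), a union bound over the $g$ groups using $s>\ln(n/\epsilon)/\delta^2$, the $\pm gb$ slack from locating $v_i$ within a chunk, and absorbing the resulting $2gb=2n/s$ term into $2n\delta$ via $s\delta>1$. The work, span, and cache-miss accounting also matches the paper's argument.
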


\begin{proof}
Since $\sum_i |U_i| = n$, and since the serial partitioning of each $U_i$
takes time $O(|U_i|)$, the total work performed by the algorithm is
$O(n)$.

To analyze cache misses, we assume without loss of generality
that array $X$ is pinned in each processor's cache (note, in
particular, that $|X| = s \le \polylog(n)$, and so $X$ fits in
cache). Thus we can ignore the cost of accesses to $X$. Note that
each $U_i$ consists of $s = \polylog n$ cache lines, meaning that
each $U_i$ fits entirely in cache. Thus the number of cache
misses needed for a thread to partition a given $U_i$ is just
$s$. Since there are $g$ of the $U_i's$, the total number of
cache misses incurred in partitioning all of the $U_i$'s is $g s
= n/b$. Besides these, there are $s/b$ cache misses for
instantiating the array $X$; and $O(1)$ cache misses for other
instantiating costs. This sums to $$\frac{n+s}{b}+O(1).$$

The span of the algorithm is $O(n/g + s) = O(b\cdot s)$, since the
each $U_i$ is of size $O(n / g)$, and because the initialization of
array $X$ can be performed in time $O(|X|) = O(s)$.

It remains to show that with probability $1-\epsilon$, $v_{\text{max}}
- v_{\text{min}} < 4n\delta$. Let $\mu$ denote the fraction of
elements in $A$ that are predecessors. For $i \in \{1, 2, \ldots,
g\}$, let $\mu_i$ denote the fraction of elements in $U_i$ that are
predecessors. Note that each $\mu_i$ is the average of $s$ independent
random variables $Y_i(1), \ldots, Y_i(s) \in [0, 1]$, where $Y_i(j)$
is the fraction of elements in the $G_i(j)$-th cache line of $A$ that
are predecessors. By construction, $G_i(j)$ has the same probability
distribution for all $i$, since $(X[j] + i) \pmod g$ is uniformly
random in $\mathbb{Z}_g$ for all $i$. It follows that $Y_i(j)$ has the
same distribution for all $i$, and thus that $\E[\mu_i]$ is
independent of $i$. Since the average of the $\mu_i$s is $\mu$, it
follows that $\E[\mu_i] = \mu$ for all $i \in \{1, 2, \ldots, g\}$.

Since each $\mu_i$ is the average of $s$ independent $[0, 1]$-random
variables, we can apply Hoeffding's inequality (i.e. a Chernoff Bound
for a random variable on $[0,1]$ rather than on $\{0,1\}$) to each
$\mu_i$ to show that it is tightly concentrated around its expected
value $\mu$, i.e.,
$$\Pr[|\mu_i - \mu| \geq \delta] < 2\exp(-2s\delta^2). $$

Since $s > \frac{\ln (n/\epsilon)}{\delta^2} \ge \frac{\ln (2n / (b\epsilon))}{2\delta^2}$, we find that for all $i \in
\{1,\ldots, g\}$,
$$\Pr[|\mu_i - \mu| \geq \delta] < 2\exp\Big({-2} \frac{\ln
  (2n/(b\epsilon))}{2\delta^2} \delta^2\Big) = \frac{\epsilon}{n/b} <
\frac{\epsilon}{g}. $$ By the union bound, it follows that with
probability at least $1 - \epsilon$, all of $\mu_1, \ldots, \mu_{g}$ are within $\delta$ of $\mu$.


To complete the proof we will show that the occurrence of the event
that all $\mu_y$ simultaneously satisfy $|\mu - \mu_y| < \delta$ implies
that $v_{\text{max}} - v_{\text{min}} \le 4n\delta$.



  Recall that $G_i(j)$ denotes the index within $A$ of the $j$ th
  cache-line contained in $U_i$. By the definition of $G_i(j)$,
  $$(j - 1)g + 1 \le G_i(j) \le jg.$$ Note that $A[v_i]$ will
  occur in the $\lceil s\mu_i \rceil$-th cache-line of $U_i$
  because $U_i$ is composed of $s$ cache lines. Hence $$(\lceil
  s\mu_i \rceil - 1) g b + 1 \le v_i \le \lceil s\mu_i \rceil g
  b,$$ which means that $$s\mu_i g b - gb + 1 \le v_i \le s\mu_i
  g b + gb.$$ Since $sgb = n$, it follows that $|v_i - n \mu_i|
  \le gb$. Therefore, $$|v_i - n \mu| < gb + n\delta.$$ This
  implies that the maximum of $|v_i - v_j|$ for any $i$ and $j$
  is at most, $2bg + 2\delta n$. Thus,
\begin{align*}
  v_{\text{max}} - v_{\text{min}} & \le 2n \left( \delta + \frac{bg}{n} \right)  = 2n \left( \delta + 1/s \right) \\
  & \le 2n \left(\delta + \frac{\delta^2}{\ln (n / \epsilon)}\right) < 4n\cdot\delta.
\end{align*}
\end{proof}

We use Proposition \ref{prop:generalResult} as a tool to analyze
the Recursive and the Hybrid Smoothed Striding Algorithms.

Rather than parameterizing the Partial Partition step in each
algorithm by $s$, Proposition \ref{prop:generalResult} suggests
that it is more natural to parameterize by $\epsilon$ and
$\delta$, which then determine $s$.

We will assume that both the hybrid and the recursive algorithms
use $\epsilon = 1/n^c$ for $c$ of our choice (i.e. with high
probability in $n$). Moreover, the Recursive Smoothed Striding
Algorithm continues to use the same value of $\epsilon$ within
recursive subproblems (i.e., the $\epsilon$ is chosen based on
the size of the first subproblem in the recursion), so that the
entire algorithm succeeds with high probability in $n$.

For both algorithms, the choice of $\delta$ results in a tradeoff
between cache misses and span. For the Recursive algorithm, we
allow for $\delta$ to be chosen arbitrarily at the top level of
recursion, and then fix $\delta  = \Theta(1)$ to be a
sufficiently small constant at all levels of recursion after the
first; this guarantees that we at least halve the size of the
problem between recursive iterations\footnote{In general, setting
$\delta = 1/8$ will result in the problem size being halved.
However, this relies on the assumption that $gb \mid n$, which is
only without loss of generality by allowing for the size of
subproblems to be sometimes artificially increased by a small
amount (i.e., a factor of $1 + gb / n = 1 + 1/s$). One can handle
this issue by decreasing $\delta$ to, say, $1/16$.}. Optimizing
$\delta$ further (after the first level of recursion) would only
affect the number of undesired cache misses by a constant factor.


Next we analyze the Hybrid Smoothed Striding Algorithm.
\begin{theorem}
	\label{thm:fullPartition}
  The Hybrid Smoothed Striding Algorithm using parameter
  $\delta\in(0,1/2)$ satisfying $\delta \ge 1/\polylog(n)$: has
  work $O(n)$; achieves span $$O\paren*{\log n \log\log n
  +\frac{b\log n}{\delta^2}},$$ with high probability in $n$; and
  incurs fewer than $$(n+O(n\delta))/b$$
cache misses with high probability in $n$.
\end{theorem}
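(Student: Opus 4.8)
The plan is to treat the Hybrid algorithm as two stages: the Partial Partition Step, which is precisely the object of Proposition~\ref{prop:generalResult}, followed by a single call of the in-place algorithm of Theorem~\ref{thminplace} on the ``middle'' subarray $A[v_{\text{min}}+1], \ldots, A[v_{\text{max}}]$. First I would fix the parameters that Proposition~\ref{prop:generalResult} leaves free: take $\epsilon = n^{-\alpha}$ for a constant $\alpha$ of our choice (so that everything below holds with high probability in $n$), and set $s = \lceil \ln(n/\epsilon)/\delta^2\rceil + 1 = \Theta\!\left(\frac{\log n}{\delta^2}\right)$. Because $\delta \ge 1/\polylog(n)$ we have $s \le \polylog(n)$, so the metadata array $X$ of length $s$ occupies $\polylog(n)$ words, a cache of size $s + c$ lies within the $\polylog(n)$-line cache each processor is assumed to have, and the hypotheses of Proposition~\ref{prop:generalResult} are all satisfied; this also keeps the algorithm in place.

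Plugging these choices into Proposition~\ref{prop:generalResult}, the Partial Partition Step has work $O(n)$, span $O(b \cdot s) = O\!\left(\frac{b \log n}{\delta^2}\right)$, and incurs $\frac{n+s}{b} + O(1)$ cache misses, and with probability at least $1 - \epsilon$ the quantity $m := v_{\text{max}} - v_{\text{min}}$ satisfies $m < 4n\delta$. Condition on this event. The second stage runs the algorithm of Theorem~\ref{thminplace} on the $m$-element subarray, with work $O(m) = O(n)$ and span $O(\log m \log\log m)$; since $m \le n$ always, this span is $O(\log n \log\log n)$ in any case. Correctness is immediate: after the Partial Partition Step $A[1..v_{\text{min}}]$ contains only predecessors and $A[v_{\text{max}}+1..n]$ only successors, so partitioning the middle completes the partition of $A$. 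Summing the two stages and taking a union bound over the single failure event gives work $O(n)$ and span $O\!\left(\log n \log\log n + \frac{b \log n}{\delta^2}\right)$, with high probability in $n$ (indeed the span bound is deterministic).

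For the cache-miss count, the Partial Partition Step contributes $\frac{n+s}{b} + O(1) = \frac{n}{b} + O\!\left(\frac{\log n}{b\delta^2}\right)$, and since $\delta \ge 1/\polylog(n)$ implies $\log n / \delta^2 = O(n\delta)$, this is $\frac{n}{b} + O\!\left(\frac{n\delta}{b}\right)$. The main obstacle — and the one step I would work through carefully — is bounding the second stage's cache misses by $O(m/b) = O(n\delta/b)$: one must check that the cache-suboptimality of the Theorem~\ref{thminplace} subroutine is only a constant factor, not a factor of $b$. I would argue this by examining the three phases of that algorithm (\textsc{MakeSuccessorHeavy}, \textsc{ImplicitParallelPrefixSum}, \textsc{Reorder}) and observing that each accesses the $m$-element subarray, together with its $O(\log m)$-word thread-local scratch arrays, through only $O(1)$ (near-)sequential passes per level of its internal recursion — for instance, in \textsc{Reorder} the later blocks $X_{t+1}, X_{t+2}, \ldots$ are contiguous and handled in order, and the prefix ranges of $A$ into which their predecessors are swapped advance monotonically; running the subroutine with internal block size $\Theta(\max(b, \log m)) \le \polylog(n)$ (which leaves the span $O(\log n \log\log n)$, as $\log b = O(\log\log n)$ for $b \le \polylog(n)$) makes each block sit in $O(1)$ cache lines and keeps this bookkeeping clean. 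Then, up to the cache-warm-up terms ignored elsewhere in the paper, the subroutine loads each of the $O(m/b)$ cache lines it touches only a constant number of times. Adding the two contributions yields $\frac{n}{b} + O\!\left(\frac{n\delta}{b}\right) = (n + O(n\delta))/b$ cache misses, with high probability in $n$, as claimed.
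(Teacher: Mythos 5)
Your proof follows the same two-stage decomposition as the paper's own: apply Proposition~\ref{prop:generalResult} to the Partial Partition Step with $\epsilon = 1/\poly(n)$ and $s = \Theta(\log n/\delta^2)$, then invoke the Theorem~\ref{thminplace} subroutine on the middle subarray of size $O(n\delta)$, and sum work, span, and cache misses (using $\log n/\delta^2 = O(n\delta)$ since $\delta \ge 1/\polylog(n)$). The one place you go beyond the paper is in actually justifying the $O(n\delta/b)$ cache-miss bound for the subroutine --- the paper asserts this without argument --- and your sketch (constantly many near-sequential passes per recursion level, with the internal block size enlarged to $\Theta(\max(b,\log m))$ so each block occupies $O(1)$ cache lines) is a correct and worthwhile filling of that gap.
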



An interesting corollary of Theorem \ref{thm:fullPartition}
concerns what happens when $b$ is small (e.g., constant) and we
choose $\delta$ to optimize span. 

\begin{corollary}[Corollary of Theorem \ref{thm:fullPartition}]
	\label{cor:fullPartition}
Suppose $b \le o(\log \log n)$. Then the Hybrid Smoothed Striding using $\delta
= \Theta\big(\sqrt{b/\log\log n}\big)$, achieves work $O(n)$, and with high
probability in $n$, achieves span $O(\log n \log\log n)$ and incurs fewer than
$(n+o(n))/b$ cache misses.
\end{corollary}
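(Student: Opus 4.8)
The plan is to simply instantiate Theorem~\ref{thm:fullPartition} with the prescribed choice $\delta = \Theta\big(\sqrt{b/\log\log n}\big)$ and check that the three resulting bounds simplify as claimed. First I would verify that this $\delta$ is an admissible parameter, i.e.\ that it lies in $(0,1/2)$ and satisfies $\delta \ge 1/\polylog(n)$. Since $b \le o(\log\log n)$, we have $b/\log\log n = o(1)$, hence $\delta = o(1)$, so in particular $\delta < 1/2$ for all sufficiently large $n$. For the lower bound, since a cache line contains at least one element we have $b \ge 1$, so $\delta = \Theta\big(\sqrt{b/\log\log n}\big) \ge \Omega\big(1/\sqrt{\log\log n}\big)$, which is certainly at least $1/\polylog(n)$. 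Thus Theorem~\ref{thm:fullPartition} applies with this $\delta$.

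Next I would substitute into the span bound. Using $\delta^2 = \Theta(b/\log\log n)$, the second term of the span becomes
\[
\frac{b\log n}{\delta^2} = \Theta\!\left(b\log n \cdot \frac{\log\log n}{b}\right) = \Theta(\log n\log\log n),
\]
so the total span $O\big(\log n\log\log n + b\log n/\delta^2\big)$ collapses to $O(\log n\log\log n)$, with high probability in $n$. For the cache misses, I would invoke $\delta = o(1)$, which gives $O(n\delta) = o(n)$, so that the bound $(n + O(n\delta))/b$ supplied by Theorem~\ref{thm:fullPartition} is exactly $(n+o(n))/b$, again with high probability in $n$. The work bound $O(n)$ is inherited verbatim from the theorem, with no dependence on $\delta$.

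There is essentially no obstacle here beyond this bookkeeping; the only point requiring a moment's care is the two-sided constraint on $\delta$. The hypothesis $b = o(\log\log n)$ is precisely what forces $\delta = o(1)$, which is needed both for admissibility ($\delta < 1/2$) and for driving the cache-miss overhead down to $o(n)/b$, while the trivial bound $b \ge 1$ is what keeps $\delta$ above the $1/\polylog(n)$ threshold required by Theorem~\ref{thm:fullPartition}. Once both are checked, the corollary follows by direct substitution.
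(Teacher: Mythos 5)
Your proposal is correct and follows essentially the same route as the paper's own proof: direct substitution of $\delta = \Theta\big(\sqrt{b/\log\log n}\big)$ into Theorem \ref{thm:fullPartition}, checking admissibility via $\delta \ge 1/\polylog(n)$, and using $b = o(\log\log n)$ to get $\delta = o(1)$ for the cache-miss bound. Your extra check that $\delta < 1/2$ is a small but welcome addition the paper leaves implicit.
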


\begin{proof}[Proof of Theorem \ref{thm:fullPartition}]
  
  We analyze the Partial Partition Step using Proposition
  \ref{prop:generalResult}. Note that by our choice of $\epsilon$,
  $s=O\left(\frac{\log n}{\delta^2}\right)$.  The Partial Partition
  Step therefore has work $O(n)$, span $O\paren*{\frac{b\log
      n}{\delta^2}},$ and incurs fewer than
	$$\frac{n}{b}+O\paren*{\frac{\log n}{b\delta^2}}+O(1)$$ 
  cache misses.

  By Theorem \ref{thminplace}, the subproblem of partitioning of
  $A[v_{\text{min}} + 1], \ldots, A[v_{\text{max}}]$ takes work
  $O(n)$. With high probability in $n$, the subproblem has size
  less than $4n\delta$, which means that the subproblem achieves
  span $$O(\log n\delta \log\log n\delta) = O(\log n \log\log
  n),$$ and incurs at most $O(n \delta / b)$ cache misses.

  The total number of cache misses is therefore,
  	$$\frac{n}{b}+O\paren*{\frac{\log n}{b\delta^2} +
    \frac{n\delta}{b}}+O(1),$$ which since $\delta \ge 1 /
  \polylog(n)$, is at most $(n+O(n\delta))/b + O(1) \le (n + O(n
  \delta)) / b,$ as desired.
\end{proof}

\begin{proof}[Proof of Corollary \ref{cor:fullPartition}] We use
  $\delta = \sqrt{b/\log\log n}$ in the result proved in Theorem
  \ref{thm:fullPartition}. \\
  First note that the assumptions of Theorem
  \ref{thm:fullPartition} are satisfied because
  $O(\sqrt{b/\log\log n}) > 1 / \polylog(n).$
	The algorithm achieves work $O(n)$. 
	With high probability in $n$ the algorithm achieves span 
	$$O\paren*{\log n \log\log n +\frac{b\log n}{\delta^2}} = O(\log n\log\log n).$$
	With high probability in $n$ the algorithm incurs fewer than 
	$$(n+O(n\delta))/b = (n+O(n\sqrt{b/\log\log n}))/b$$ 
	cache misses.
	By assumption $\sqrt{b/\log\log n} = o(1)$, so this reduces to 
	$(n+o(n))/b$
	cache misses, as desired.
\end{proof}


The next theorem analyzes the span of the Recursive Smoothed Striding Algorithm.
\begin{theorem}
	\label{thm:groupedPartitionAlg}
	With high probability in $n$, the Recursive Smoothed Striding
        algorithm using parameter $\delta \in(0,1/2)$ satisfying
        $\delta \ge 1 / \polylog(n)$: achieves work $O(n)$, attains span
	$$O\left(b\left(\log^2 n + \frac{\log n}{\delta^2}\right)\right),$$
	and incurs $(n+O(n \delta))/b$ cache misses. 
\end{theorem}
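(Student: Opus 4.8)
The plan is to apply Proposition~\ref{prop:generalResult} once at each level of the recursion and combine the per-level bounds by a geometric summation. At the top level we run the Partial Partition Step with the given parameter $\delta$ and with $\epsilon = 1/n^c$; at every deeper level we retain the same $\epsilon$ but reset $\delta$ to a sufficiently small constant (taking $\delta = 1/16$ rather than $1/8$, as in the earlier footnote, absorbs the factor-$(1 + 1/s)$ blow-up coming from the $gb \mid n$ padding), so that with high probability the recursive subproblem $A[v_{\text{min}}+1], \ldots, A[v_{\text{max}}]$ has size at most half that of the current problem. Hence there are $O(\log n)$ recursive levels, the serial base case is first reached on a subproblem of size $\Theta(b\log n)$, and a union bound over the $O(\log n)$ invocations of Proposition~\ref{prop:generalResult} --- each failing with probability at most $1/n^c$ --- shows that all the size-halving and concentration events hold simultaneously with high probability in $n$.

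Write $n_0 = n \ge n_1 \ge n_2 \ge \cdots$ for the subproblem sizes, noting $n_1 \le n$ always, $n_1 \le 4n\delta$ with high probability, and $n_{k+1} \le n_k/2$ for $k \ge 1$. The work is $\sum_k O(n_k) = O(n)$ (the $O(b\log n)$-work base case is absorbed). For the span, Proposition~\ref{prop:generalResult} gives each level a Partial Partition Step of span $O(b s_k)$, to which we add $O(\log n)$ for the divide-and-conquer computation of $v_{\text{min}}$ and $v_{\text{max}}$; here $s_0 = O(\log n / \delta^2)$ and $s_k = O(\log n)$ for $k \ge 1$ because $\epsilon = 1/n^c$ and $\delta = \Theta(1)$ below the top level. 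Summing over the $O(\log n)$ levels yields span $O(b\log n/\delta^2) + O(\log n)\cdot O(b\log n) = O\big(b(\log^2 n + \log n/\delta^2)\big)$, with the $O(b\log n)$-span base case absorbed. For cache misses, each level contributes $(n_k + s_k)/b + O(1)$: the $n_k$-terms sum to $\big(n + \sum_{k\ge1} n_k\big)/b \le (n + 2n_1)/b = (n + O(n\delta))/b$; the $s_k$-terms sum to $O(\log n/(b\delta^2)) + O(\log^2 n/b)$; the additive $O(1)$'s and the base case together add $O(\log n)$ further misses; and since $\delta \ge 1/\polylog(n)$ makes $1/\delta^2 \le \polylog(n)$, every one of these overhead terms is $O(n\delta/b)$, so the total is $(n + O(n\delta))/b$.

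The work is almost entirely in the recursion bookkeeping rather than in any individual estimate. The delicate points are: checking that fixing $\delta = \Theta(1)$ below the top level really forces geometric shrinkage of the subproblem even after the $gb \mid n$ padding (which is what simultaneously bounds the number of levels by $O(\log n)$ and makes $\sum_{k \ge 1} n_k = O(n\delta)$ rather than $O(n)$, and hence keeps the leading cache-miss coefficient at $1$); and checking that the hypotheses of Proposition~\ref{prop:generalResult} still hold at the deepest levels, where the subproblem size has dropped to $\polylog(n)$ while $\epsilon$ remains $1/n^c$ --- the relevant observation is that $s_k = O(\log n)$ at every level, so each $U_i$ and the metadata array $X$ still fit inside the $\polylog(n)$-size cache, and the Hoeffding-plus-union-bound core of the proof of Proposition~\ref{prop:generalResult} goes through unchanged. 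The remaining steps are routine geometric sums.
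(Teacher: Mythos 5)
Your proposal is correct and follows essentially the same route as the paper's own proof: apply Proposition~\ref{prop:generalResult} level by level with the given $\delta$ at the top and a fixed small constant $\delta$ thereafter, use the resulting geometric shrinkage to bound the number of levels by $O(\log n)$ and the post-top-level sizes by $O(n\delta)$, and sum the per-level work, span, and cache-miss contributions. Your added checks (the $\Theta(b\log n)$ base case, the union bound with $\epsilon$ fixed at $1/n^c$ across levels, and the verification that $s_k = O(\log n)$ keeps $X$ and each $U_i$ in cache at deep levels) are consistent with, and slightly more explicit than, the paper's argument.
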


A particularly natural parameter setting for the Recursive
algorithm occurs at $\delta = 1 / \sqrt{\log n}$.
\begin{corollary}[Corollary of Theorem \ref{thm:groupedPartitionAlg}]
  \label{cor:groupedPartitionAlg}
  With high probability in $n$, the
  Recursive Smoothed Striding Algorithm using parameter
  $\delta=1/\sqrt{\log n}$: achieves work $O(n)$, attains span
  $O(b\log^2 n)$, and incurs $(1 + o(1))n/b$ cache misses. 
\end{corollary}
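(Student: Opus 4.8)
The plan is to obtain Corollary~\ref{cor:groupedPartitionAlg} as a direct specialization of Theorem~\ref{thm:groupedPartitionAlg}, which we are entitled to assume since its statement appears earlier in the text. First I would verify that the parameter $\delta = 1/\sqrt{\log n}$ is admissible: for all sufficiently large $n$ it lies in $(0,1/2)$ (this needs only $\log n > 4$), and it satisfies the hypothesis $\delta \ge 1/\polylog(n)$ because $\sqrt{\log n}$ is itself a $\polylog(n)$ function. Hence Theorem~\ref{thm:groupedPartitionAlg} applies with this choice of $\delta$, and all of its conclusions hold with high probability in $n$.

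Next I would substitute $\delta = 1/\sqrt{\log n}$ into each of the three bounds. The work bound $O(n)$ is unchanged. For the span, note $\delta^2 = 1/\log n$, so $\log n/\delta^2 = \log^2 n$, and the span bound $O\!\left(b\left(\log^2 n + \log n/\delta^2\right)\right)$ collapses to $O(b\log^2 n)$. For the cache misses, the bound $(n + O(n\delta))/b$ becomes $(n + O(n/\sqrt{\log n}))/b$; since $1/\sqrt{\log n} = o(1)$, the error term is $o(n)$, giving $(1 + o(1))n/b$ cache misses. All three conclusions inherit the ``with high probability in $n$'' qualifier from the theorem.

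I do not expect any genuine obstacle here: the corollary is a pure parameter specialization, and the only things to be careful about are confirming the admissibility of $\delta$ and tracking that the high-probability guarantee carries over verbatim (the base of the logarithm is irrelevant, being absorbed into the asymptotic notation). The substantive content — bounding the recursion depth of the Recursive Smoothed Striding Algorithm, summing the per-level span and cache-miss contributions from Proposition~\ref{prop:generalResult} over the geometrically shrinking subproblems, and arguing that the constant-$\delta$ levels after the first contribute only a constant factor to the undesired cache misses — all resides in the proof of Theorem~\ref{thm:groupedPartitionAlg}, which is available to us as a black box.
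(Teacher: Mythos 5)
Your proposal is correct and matches the paper's own proof, which likewise obtains the corollary by substituting $\delta = 1/\sqrt{\log n}$ into the span bound (so $\log n/\delta^2 = \log^2 n$) and the cache-miss bound (so $n\delta = n/\sqrt{\log n} = o(n)$) of Theorem~\ref{thm:groupedPartitionAlg}. Your additional check that $\delta \ge 1/\polylog(n)$ is a sensible (if routine) verification that the paper leaves implicit.
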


\begin{proof}[Proof of Theorem \ref{thm:groupedPartitionAlg}]
  To avoid confusion, we use $\delta'$, rather than $\delta$, to
  denote the constant value of $\delta$ used at levels of recursion
  after the first.

  By Proposition \ref{prop:generalResult}, the top level of the algorithm
  has work $O(n)$, span $O\Big(b\frac{\log n}{\delta^2}\Big),$ and
  incurs $\frac{s+n}{b} + O(1)$ cache misses.  The recursion reduces
  the problem size by at least a factor of $4\delta$, with high
  probability in $n$.

  At lower layers of recursion, with high probability in $n$, the
  algorithm reduces the problem size by a factor of at least
  $1/2$ (since $\delta$ is set to be a sufficiently small
  constant). For each $i > 1$, it follows that the size of the
  problem at the $i$-th level of recursion is at most $O(n \delta
  / 2^i)$.
  
  The sum of the sizes of the problems after the first level of
  recursion is therefore bounded above by a geometric series summing to at most $O(n
  \delta)$. This means that the total work of the algorithm is at most
  $O(n\delta) + O(n) \le O(n)$.

  Recall that each level $i > 1$ uses $s =
  \frac{\ln(2^{-i}n\delta'/b)}{\delta'^2}$, where $\delta' =
  \Theta(1)$. It follows that level $i$ uses $s \le O(\log n)$.
  Thus, by Proposition \ref{prop:generalResult}, level $i$
  contributes $O(b\cdot s)=O(b \log n)$ to the span.  Since there
  are at most  $O(\log n)$ levels of recursion, the total span in
  the lower levels of recursion is at most $O(b\log^2 n)$, and
  the total span for the algorithm is at most,
  $$O\left(b\left(\log^2 n + \frac{\log
  n}{\delta^2}\right)\right).$$
        
  To compute the total number of cache misses of the algorithm,
  we add together $(n+s)/b+O(1)$ for the top level, and then, by
  Proposition \ref{prop:generalResult}, at most $$\sum_{0 \leq i<
  O(\log n)}\frac{1}{b} O\paren*{2^{2-i}n\delta + \log n} \le
  O\left(\frac{1}{b} (n \delta + \log^2 n)\right).$$ for lower
  levels. Thus the total number of cache misses for the algorithm
  is, $$\frac{1}{b}\left(n+\frac{\log n}{\delta^2 }\right) +
  O(n\delta + \log^2 n) / b = (n+O(n\delta))/b.$$ 
\end{proof}

\begin{proof}[Proof of Corollary \ref{cor:groupedPartitionAlg}]
  By Theorem \ref{thm:groupedPartitionAlg}, with high probability
  in $n$, the algorithm has work $O(n)$, the algorithm has span
  $$O\left(b\left(\log^2 n + \frac{\log
  n}{\delta^2}\right)\right) = O(b\log^2 n),$$ and the algorithm
  incurs $$(n+O(n\delta))/b = (n+O(n/\sqrt{\log n}))/b =
  (n+o(n))/b$$ cache misses.
\end{proof}


\section{Performance Comparisons}\label{secexp}

In this section, we implement the techniques from Section
\ref{secalg} and Section \ref{sec:smoothing} to build space-efficient and
in-place parallel-partition functions.

Each implementation considers an array of $n$ 64-bit integers, and
partitions them based on a pivot. The integers in the array are
initially generated so that each is randomly either larger or smaller
than the pivot.

In Subsection \ref{subsecclassic}, we evaluate the techniques in
Section \ref{secalg} for transforming the standard
parallel-prefix-based partition algorithm into an in-place
algorithm. We compare the performance of three parallel-partition
implementations: (1) The \defn{high-space} implementation which
follows the standard parallel-partition algorithm exactly; (2) a
\defn{medium-space} implementation which reduces the space used for
the Parallel-Prefix phase; and (3) a \defn{low-space} implementation
which further eliminates the auxiliary space used in the Reordering
phase of the algorithm. The low-space implementation still uses a
small amount of auxiliary memory for the parallel-prefix, storing
every $O(\log n)$-th element of the parallel-prefix array explicitly
rather than using the implicit-storage approach in Section
\ref{secalg}. Nonetheless the space consumption is several orders of
magnitude smaller than the original algorithm.

In addition to achieving a space-reduction, the better cache-behavior
of the low-space implementation allows for it to achieve a speed
advantage over its peers, in some cases completing roughly twice as
fast as the medium-space implementation and four times as fast as the
low-space implementation. We show that all three implementations are
bottlenecked by memory throughput, however, suggesting that the cache-optimal Smoothed Striding Algorithm can do better.

In Subsection \ref{subsecstrided}, we evaluate the performance of the
Recursive Smoothed Striding Algorithm and the Strided
Algorithm. Unlike the algorithms described above, the implementations
of both of these algorithms are fully in-place, meaning that the total
space overhead is only $\polylog n$. The cache efficiency of these two
algorithms allows for them to achieve substantially better scaling
than their parallel-prefix-based counterparts. The Strided Algorithm
tends to slightly outperform the Smoothed Striding Algorithm, though
on 18 threads their performance is within 15\% of one-another. We
conclude that the Smoothed Striding Algorithm allows for one to obtain
empirical performance comparable to that of the Strided Algorithm,
while simultaneously achieving the provable guarantees on span and
cache-efficiency missing from the original Strided Algorithm.



\paragraph{Machine Details}
Our experiments are performed on a two-socket machine with eighteen
2.9 GHz Intel Xeon E5-2666 v3 processors. To maximize the memory
bandwidth of the machine, we use a NUMA memory-placement policy in
which memory allocation is spread out evenly across the nodes of the
machine; this is achieved using the \emph{interleave=all} option in
the Linux \emph{numactl} tool \cite{Kleen05}. Worker threads in our
experiments are each given their own core, with hyperthreading
disabled.

Our algorithms are implemented using the CilkPlus task parallelism
library in C++. The implementations avoid the use of concurrency
mechanisms and atomic operations, but do allow for concurrent reads to
be performed on shared values such as $n$ and the pointer to the input
array. Our code is compiled using g++ 7.3.0, with \emph{march=native}
and at optimization level three. 

Our implementations are available on GitHub.

\subsection{Comparing Parallel-Prefix-Based Algorithms}\label{subsecclassic}

In this section, we compare four partition implementations,
incorporating the techniques from Section \ref{secalg} in order to
achieve space efficiency:
\begin{itemize}[leftmargin = .15in]
\item \emph{A Serial Baseline:} This uses the serial in-place
  partition implementation from GNU Libc quicksort, with minor
  adaptations to optimize it for the case of sorting 64-bit integers
  (i.e., inlining the comparison function, etc.).
\item \emph{The High-Space Parallel Implementation:} This uses the
  standard parallel partition algorithm \cite{Blelloch96,AcarBl16}, as
  described in Section \ref{secprelim}. The space overhead is roughly
  $2n$ eight-byte words.
\item \emph{The Medium-Space Parallel Implementation:} Starting with
  the high-space implementation, we reduce the space used by the
  Parallel-Prefix phase by only constructing every $O(\log n)$-th
  element of the prefix-sum array $B$, as in Section
  \ref{secalg}. (Here $O(\log n)$ is hard-coded as 64.) The array $B$
  is initialized to be of size $n / 64$, with each component equal to
  $\sum_{i=1}^{64} \dec(A[64 (i-1)+1])$, and then a parallel prefix sum is computed on
  the array $B$. Rather than implicitly encoding the elements of $B$ in
  $A$, we use an auxiliary array of size $n / 64$ to explicitly store
  the prefix sums.

  The algorithm
  has a space overhead of $\frac{n}{32} + n$ eight-byte
  words.\footnote{In addition to the auxiliary array of size $n / 64$,
    we use a series of smaller arrays of sizes $n / 128, n / 256,
    \ldots$ in the recursive computation of the prefix sum. The
    alternative of performing the parallel-prefix sum in place, as in
    Section \ref{secalg}, tends to be less cache-friendly in
    practice.}
\item \emph{The Low-Space Parallel Implementation:}
Starting with the medium-space implementation, we make the reordering
phase completely in-place using the preprocessing technique in Section
\ref{secalg}.\footnote{Depending on whether the majority of elements
  are predecessors or successors, the algorithm goes down separate
  (but symmetric) code paths. In our timed experiments we test only
  with inputs containing more predecessors than successors, since this
  the slower of the two cases (by a very slight amount) for our
  implementation.} The only space overhead in this implementation is
the $\frac{n}{32}$ additional 8-byte words used in the prefix sum.
\end{itemize}


We remark that the ample parallelism of the low-space algorithm makes
it so that for large inputs the value $64$ can easily be increased
substantially without negatively effecting algorithm performance. For
example, on an input of size $2^{28}$, increasing it to $4096$ has
essentially no effect on the empirical runtime while bringing the
auxiliary space-consumption down to a $\frac{1}{2048}$-fraction of the
input size. (In fact, the increase from 64 to 4096 results in roughly
a 5\% speedup.)

\paragraph{An Additional Optimization for The High-Space Implementation}
The optimization of reducing the prefix-sum by a factor of $O(\log n)$
at the top level of recursion, rather than simply by a factor of two,
can also be applied to the standard parallel-prefix algorithm when
constructing a prefix-sum array of size $n$. Even without the space
reduction, this reduces the (constant) overhead in the parallel prefix
sum, while keeping the overall span of the parallel-prefix operation
at $O(\log n)$. We perform this optimization in the high-space
implementation.

\paragraph{Performance Comparison}
\begin{figure*}
  \begin{center}
    \CILKtable 
  \end{center}
    \caption{For a fixed table-size $n = 2^{30}$, we compare each
      implementation's runtime to the Libc serial baseline, which takes 3.9
      seconds to complete (averaged over five trials). The $x$-axis
      plots the number of worker threads being used, and the $y$-axis
      plots the multiplicative speedup over the serial baseline. Each
      time (including the serial baseline) is averaged over five trials.}
      \label{tablecilk}
\end{figure*}

\begin{figure*}
  \begin{center}
    \serialtable
  \end{center}
  \caption{We compare the performance of the implementations in
    serial, with no scheduling overhead. The $x$-axis is the
    log-base-$2$ of the input size, and the $y$-axis is the
    multiplicative slowdown when compared to the Libc serial baseline. Each
    time (including the baseline) is averaged over five
    trials.}
  \label{tableserial}
\end{figure*}

Figure \ref{tablecilk} graphs the speedup of the each of the parallel
algorithms over the serial algorithm, using varying numbers of worker
threads on an 18-core machine with a fixed input size of $n =
2^{30}$. Both space optimizations result in performance improvements,
with the low-space implementation performing almost twice as well as
the medium-space implementation on eighteen threads, and almost four
times as well as the high-space implementation. 

Figure \ref{tableserial} compares the performances of the
implementations in serial. Parallel-for-loops are replaced with serial
for-loops to eliminate scheduler overhead. As the input-size varies,
the ratios of the runtimes vary only slightly. The low-space
implementation performs within a factor of roughly 1.9 of the serial
implementation. As in Figure \ref{tablecilk},
both space optimizations result in performance improvements.

\paragraph{The Source of the Speedup} If we compare the number of
instructions performed by the three parallel implementations,
then the medium-space algorithm would seem to be the clear
winner. Using Cachegrind to profile the number of instructions
performed in a (serial) execution on an input of size
$2^{28}$,\footnote{This smaller problem size is used to
compensate for the fact that Cachegrind can be somewhat slow.}
the high-space, medium-space, and low-space implementations
perform 4.4 billion, 2.9 billion, and 4.6 billion instructions,
respectively.

Cache misses tell a different story, however. Using Cachegrind to
profile the number of top-level cache misses in a (serial) execution
on an input of size $2^{28}$, the high-space, medium-space, and
low-space implementations incur 305 million, 171 million, and 124
million cache misses, respectively.

To a first approximation, the number of cache misses by each algorithm
is proportional to the number of times that the algorithm scans
through a large array. By eliminating the use of large auxiliary
arrays, the low-space implementation has the opportunity to achieve a
reduction in the number of such scans. Additionally, the low-space
algorithm allows for steps from adjacent phases of the algorithm to
sometimes be performed in the same pass. For example, the enumeration
of the number of predecessors and the top level of the Preprocessing
phase can be performed together in a single pass on the input
array. Similarly, the later levels of the Preprocessing phase (which
focus on only one half of the input array) can be combined with the
construction of (one half of) the auxiliary array used in the Parallel
Prefix Sum phase, saving another half of a pass.


\paragraph{The Memory-Bandwidth Limitation}
\begin{figure*}
  \begin{center}
    \partitionbandwidthboundtable
  \end{center}
  \caption{We compare the performances of the low-space and
    Smoothed Striding parallel-partition algorithms to their ideal
    performance determined by memory-bandwidth constraints on inputs
    of size $2^{30}$. The $x$-axis is the number of worker threads,
    and the $y$-axis is the multiplicative speedup when compared to
    the Libc serial baseline (which is computed by an average over five
    trials). Each data-point is averaged over five trials.}
  \label{tablebandwidth}
\end{figure*}

The comparison of cache misses suggests that performance is
bottlenecked by memory bandwidth. To evaluate whether this is the
case, we measure for each $t \in \{1, \ldots, 18\}$ the memory
throughput of $t$ threads attempting to scan through disjoint
portions of a large array in parallel. We measure two types of
bandwidth, the \defn{read-bandwidth}, in which the threads are
simply trying to read from the array, and the \defn{read/write
bandwidth}, in which the threads are attempting to immediately
overwrite entries to the array after reading them. Given
read-bandwidth $r$ bytes/second and read/write bandwidth $w$
bytes/second, the time needed for the low-space algorithm to
perform its memory operations on an input of $m$ bytes will be
roughly $3.5 m / w + .5m / r$ seconds.\footnote{A naive
implementation of the algorithm would require roughly $m / r$
time to count the number of predecessors, followed by $2m / w$
time to perform the Preprocessing Phase, followed by roughly $m /
r$ time to perform the Parallel Prefix Sum Phase, and then
roughly $1.5 m/w$ time for the In-Place Reordering Phase. As
described in the previous paragraph, however, the counting of
predecessors and the Parallel Prefix Sum phase can both be
overlapped with the Preprocessing phase so that their total added
contribution to the Memory-Bandwidth Limitation is only $.5 m /
r$.}  We call this the \defn{bandwidth constraint}. No matter how
optimized the implementation of the low-space algorithm is, the
bandwidth constraint serves as a hard lower bound for the running
time.\footnote{Empirically, on an array of size $n = 2^{28}$, the
  total number of cache misses is within $8\%$ of what this
assumption would predict, suggesting that the bandwidth
constraint is within a small amount of the true bandwidth-limited
runtime.}

Figure \ref{tablebandwidth} compares the time taken by the low-space
algorithm to the bandwidth constraint as the number of threads $t$
varies from $1$ to $18$. As the number of threads grows, the algorithm
becomes bandwidth limited, achieving its best possible parallel
performance on the machine. The algorithm scales particularly well on
the first socket of the machine, achieving a speedup on nine cores of
roughly six times better than its performance on a single core, and
then scales more poorly on the second socket as it becomes
bottlenecked by memory bandwidth.

\paragraph{Implementation Details}
In each implementation, the parallelism is achieved through simple
parallel-for-loops, with one exception at the beginning of the
low-space implementation, when the number of predecessors in the input
array is computed. Although CilkPlus Reducers (or OpenMP Reductions)
could be used to perform this parallel summation within a
parallel-for-loop \cite{FrigoLe09}, we found a slightly more ad-hoc
approach to be faster: Using a simple recursive structure, we manually
implemented a parallel-for-loop with Cilk Spawns and Syncs, allowing
for the summation to be performed within the recursion.

\subsection{Comparing the Smoothed Striding and Strided Algorithms} \label{subsecstrided}

In this section we consider the performance of the Strided Algorithm
and the Recursive Smoothed Striding Algorithm. Past work
\cite{Frias08} found that, on large numbers of threads, the Strided
Algorithm has performance close to that of other non-EREW state-of-the
art partition algorithms (i.e., within 20\% of the best
atomic-operation based algorithms). The Strided Algorithm does not
offer provable guarantees on span and cache-efficiency, however; and
indeed, the reason that the algorithm cannot recurse on the subarray
$A[v_{\text{min}} + 1], \ldots, A[v_{\text{max}}]$ is that the subarray has been
implicitly constructed to be worst-case for the algorithm. In this
subsection, we show that, with only a small loss in performance, the
Smoothed Striding Algorithm can be used to achieve provable guarantees
on arbitrary inputs. We remark that we do not make any attempt to
generate worst-case inputs for the Strided Algorithm (in fact the
random inputs that we use are among the only inputs for which the
Strided Algorithm does exhibit provable guarantees!).

Figures \ref{tableserial} and \ref{tablecilk} evaluate the performance
of the Smoothed Striding and Strided algorithms in serial and in
parallel. On a single thread, the Smoothed Striding and Strided
algorithms perform approximately 1.5 times slower than the Libc-based
serial implementation baseline. When executed on multiple threads, the
performances of the Smoothed Striding and Strided Algorithms scale
close to linearly in the number of threads. On 18 threads, the
Smoothed Striding Algorithm achieves a $9.6 \times $ speedup over the
Libc-based Serial Baseline, and the Strided Algorithm achieves an
$11.1 \times$ speedup over the same baseline.

The nearly-ideal scaling of the two algorithms can be explained by
their cache behavior. Whereas the parallel-prefix-based algorithms
were bottlenecked by memory bandwidth, Figure \ref{tablebandwidth}
shows that the same is no longer true for the Smoothed Striding
Algorithm. The figure compares the performance of the Smoothed
Striding Algorithm to the minimum time needed simple to read and overwrite
each entry of the input array using $18$ concurrent threads without any other
computation (i.e., the memory bandwidth constraint). On 18 threads, the time required by the memory bandwidth
constraint constitutes $58\%$ of the algorithm's total running time.

\paragraph{NUMA Effects}
We remark that the use of the Linux \emph{numactl} tool \cite{Kleen05}
to spread memory allocation evenly across the nodes of the machine is
necessary to prevent the Smoothed Striding Algorithm and the Strided
Algorithm from being bandwidth limited. For example, if we replicate
the 18-thread column of Figure \ref{tablebandwidth} without using
\emph{numactl}, then the speedup of the Smoothed Striding Algorithm is
8.2, whereas the memory-bandwidth bound for maximum possible speedup
is only slightly larger at $10.2$.

\paragraph{Implementation Details} Both algorithms use $b = 512$. The Smoothed Striding Algorithm uses
slightly tuned $\epsilon, \delta$ parameters similar to those outlined
in Corollary \ref{cor:groupedPartitionAlg}. Although $v_{\text{min}}$
and $v_{\text{max}}$ could be computed using CilkPlus Reducers
\cite{FrigoLe09}, we found it advantageous to instead manually
implement the parallel-for-loop  in the Partial Partition step with Cilk Spawns and Syncs, and to
compute $v_{\text{min}}$ and $v_{\text{max}}$ within the recursion.

\paragraph{Example Application: A Full Quicksort}

\begin{figure*}
  \begin{center}
    \CILKsorttable
  \end{center}
  \caption{We compare the performance of the low-space and high-span
    sorting implementations running on varying numbers of threads and
    input sizes. The $x$-axis is the number of worker threads and the
    $y$-axis is the multiplicative speedup when compared to the Libc serial
    baseline. Each time (including each serial baseline)
    is averaged over five trials.}
  \label{tablesort}
\end{figure*}

In Figure \ref{tablesort}, we graph the performance of a parallel
quicksort implementation using the low-space parallel-prefix-based
algorithm, the Smoothed Striding Algorithm, and the Strided
Algorithm. We compare the algorithm performances 
with varying numbers of worker threads and input sizes to GNU Libc quicksort; the input
array is initially in a random permutation.

Our parallel quicksort uses the parallel-partition algorithm at the
top levels of recursion, and then swaps to the serial-partitioning
algorithm once the input size has been reduced by at least a factor of
$8p$, where $p$ is the number of worker threads. By using the
serial-partitioning algorithm on the small recursive subproblems we
avoid the overhead of the parallel algorithm, while still achieving
parallelism between subproblems. Small recursive problems also exhibit
better cache behavior than larger ones, reducing the effects of
memory-bandwidth limitations on the performance of the parallel
quicksort, and further improving the scaling.

\section{Conclusion and Open Questions}\label{sec:open}

Parallel partition is a fundamental primitive in parallel algorithms
\cite{Blelloch96,AcarBl16}. Achieving faster and more space-efficient
implementations, even by constant factors, is therefore of high
practical importance. Until now, the only space-efficient algorithms
for parallel partition have relied extensively on concurrency
mechanisms or atomic operations, or lacked provable performance
guarantees. If a parallel function is going to be invoked within a large
variety of applications, then provable guarantees are highly
desirable. Moreover, algorithms that avoid the use of concurrency
mechanisms tend to scale more reliably (and with less dependency on
the particulars of the underlying hardware).

In this paper, we have shown that, somewhat surprisingly, one can
adapt the classic parallel algorithm to completely eliminate the use
of auxiliary memory, while still using only exclusive read/write
shared variables, and maintaining a polylogarithmic span. Although the
superior cache performance of the low-space algorithm results in
practical speedups over its out-of-place counterpart, both algorithms
remain far from the state-of-the art due to memory bandwidth bottlenecks. To close this gap, we also
presented a second in-place algorithm, the Smoothed Striding
Algorithm, which achieves polylogarithmic span while guaranteeing
provably optimal cache performance up to low-order factors. The
Smoothed Striding Algorithm introduces randomization techniques to the
previous (blocked) Striding Algorithm of \cite{Frias08, FrancisPa92},
which was known to perform well in practice but which previously
exhibited poor theoretical guarantees. Our implementation of the
Smoothed Striding Algorithm is fully in-place, exhibits
polylogarithmic span, and has optimal cache performance.

Our work prompts several theoretical questions. Can fast
space-efficient algorithms with polylogarithmic span be found for
other classic problems such as randomly permuting an array
\cite{Anderson90, AlonsoSc96, ShunGu15}, and integer sorting
\cite{Rajasekaran92, HanHe12, AlbersHa97, Han01, GerbessiotisSi04}?
Such algorithms are of both theoretical and practical interest, and
might be able to utilize some of the techniques introduced in this
paper.

Another important direction of work is the design of in-place parallel
algorithms for sample-sort, the variant of quicksort in which multiple
pivots are used simultaneously in each partition. Sample-sort can be
implemented to exhibit fewer cache misses than quicksort, which is
especially important when the computation is memory-bandwidth
bound. The known in-place parallel algorithms for sample-sort rely
heavily on atomic instructions \cite{AxtmannWi17} (even requiring
128-bit compare-and-swap instructions). Finding fast algorithms that
use only exclusive-read-write memory (or
concurrent-read-exclusive-write memory) is an important direction of
future work.

\bibliographystyle{plain}
\bibliography{paper}

\end{document}